\newtheorem{theorem}{Theorem}
\theoremstyle{plain}
\newtheorem{corollary}{Corollary}
\newtheorem{definition}{Definition}
\newtheorem{lemma}{Lemma}
\newtheorem{notation}{Notation}
\newtheorem{proposition}{Proposition}
\newtheorem{remark}{Remark}
\numberwithin{equation}{section}
\begin{document}
\title[]{Global Existence and Long-time Behaviour of Nonlinear Equation of
Schr\"{o}dinger type }
\author{Kamal N. Soltanov}
\address{ {\small Dep. Math. Fac. Sci. Hacettepe University, Beytepe,
Ankara, TR-06800, TURKEY ; }\\
{\small Tel. +90 312 2977860; Fax: +90 312 2992017\ }}
\email{soltanov@hacettepe.edu.tr ; sultan\_kamal@hotmail.com\ }
\urladdr{http://www.mat.hacettepe.edu.tr/personel/akademik/ksoltanov/index.html }
\date{}
\subjclass[2000]{Primary 35J60, 35B33, 35Q55 ; Secondary 33E30, 46F10, 46T20 
}
\keywords{Nonlinear Schr\"{o}dinger equation, generalized function
coefficient, general existence theorem, solvability theorem, behaviour of
solution }

\begin{abstract}
In this paper we study a mixed problem for the nonlinear Schr\"{o}dinger
equation globally that have a nonlinear adding, in which the coefficient is
a generalized function. Here is proved a global solvability theorem of the
considered problem with use of the general solvability theorem of the
article [30]. Furthermore here is investigated also the behaviour of the
solution of the studied problem.
\end{abstract}

\thanks{Supported by 110T558-projekt of TUBITAK}
\maketitle

We consider the following problem for the nonhomogeneous nonlinear Schr\"{o}%
dinger equation 
\begin{equation}
i\frac{\partial u}{\partial t}-\Delta u+f\left( x,u\right) =h\left(
t,x\right) ,\quad \left( t,x\right) \in R_{+}\times \Omega \equiv Q, 
\tag{0.1}
\end{equation}%
\begin{equation}
u\left( 0,x\right) =u_{0}\left( x\right) ,\quad x\in \Omega \subset R^{n},\
n\geq 1;\quad u\left\vert _{~R_{+}\times \partial \Omega }\right. =0, 
\tag{0.2}
\end{equation}%
where $h\left( t,x\right) $ and $u_{0}\left( x\right) $ are complex
functions, $f\left( x,\tau \right) $ is a distribution (generalized
function) with respect to variable $x\in \Omega $, $\Omega $ is a bounded
domain with sufficiently smooth boundary $\partial \Omega $, $i\equiv \sqrt{%
-1}$. We investigate this problem in the case, when the function $f(x,t)$
can be represented as $f\left( x,u\right) =q\left( x\right) \left\vert
u\left( t,x\right) \right\vert ^{p-2}u\left( t,x\right) +a\left( x\right)
\left\vert u\left( t,x\right) \right\vert ^{\widetilde{p}-2}u\left(
t,x\right) $, i.e. the function $f$ has the growth with respect to unknown
function of the polynomial type, where $a:\Omega \longrightarrow R$ is some
function and $q:\Omega \longrightarrow R$ is a generalized function, $p\geq
2 $, $\widetilde{p}\geq 2$, $h\in L_{2}\left( Q\right) $ (i.e. $h\left(
t,x\right) \equiv h_{1}\left( t,x\right) +ih_{2}\left( t,x\right) $ and $%
h_{j}\in L_{2}\left( Q\right) $, $j=1,2$).

The nonlinear Schr\"{o}dinger equation of the type (0.1), and also
steady-state case of the equation (0.1) arises in several models of
different physical phenomena corresponding to various function $f$. The
equation of such type were studied in many articles under different
conditions on the function $f$ in the dynamic case (see for example [2, 4, 6
- 9, 14, 17, 18, 20, 22, 24, 31, 32, 34] and the references therein) and in
the steady-state case (see, for example, [1, 3, 5, 10 - 16, 18, 19, 23 - 26,
28, 32, 33, 35] and references therein). It is known that in this case the
equation (0.1) in the steady-state case (i.e. if $u$ is independent of $t$)
is an equation of the semiclassical nonlinear Schr\"{o}dinger type (i.e.
NLS) (see, [1, 2, 3, 10, 13] and references therein). Considerable attention
has been paid in recent years to the problem (0.1) for small $\varepsilon >0$
as the coefficients of the linear part since the solutions are known as in
the semiclassical states, which can be used to describe the transition from
quantum to classical mechanics (see, [3, 10 - 14, 23 - 25, 32 - 35] and
references therein).

In the above mentioned articles the equation (0.1), and also the
steady-state case was considered with various functions $f(x,u)$ that are
mainly Caratheodory functions\footnote{%
Let $f:\Omega \times R^{m}\longrightarrow R$ be a given function, where $%
\Omega $ is a nonempty measurable set in $R^{n}$ and $n,m\geq 1$. Then $f$
is Caratheodory function if the following hold: $x\longrightarrow f\left(
x,\eta \right) $ is measurable on $\Omega $ for all $\eta \in R^{m}$, and $%
\eta \longrightarrow f\left( x,\eta \right) $ is continuous on $R^{m}$ for
almost all $x\in \Omega $.} with some additional properties. Moreover in
some of these articles are presumed, that dates of considered problem
possess more smoothness and study the behaviour of a solution of the posed
problem with use the Fourier mod. Although such cases when $f(x,u)$
possesses a singularity with respect to the variable $x$ of certain type
were also investigated (as equations Emden-Fowler, Yamabe, NLS etc.), but in
all of these articles the coefficient $q\left( x\right) $ is a function in
the usual sense (of a Lebesgue space or of a Sobolev space).

In this paper we study problem (0.1)-(0.2) in the case when $f$ have the
above representation and the function $q$ is a generalized function and the
beahviour of the solution. Moreover here for the proof of the existence
theorem of the problem is used some different method, which allow us several
other possibility. It should be noted that the steady-state case of the
problem of such type were studied in [28]. Here we study problem (0.1)-(02)
globally in the dynamical case, which in [31] is studied for $t\in \left(
0,T\right) $, $T<\infty $. Here an existence theorem (section 1) for the
problem (0.1) - (0.2) is proved in the model case when $f\left( x,u\right) $
only has the above expression (section 4).

In section 2 we have defined how to understand the equation (0.1) with use
of representation of certain generalized functions and properties of some
special class of functions (see, for example, [27, 28]). In section 3 we
have conducted variants of the general results from [29, 30], on which the
proof of the solvability theorem is based and in section 5 is studied the
behaviour of the solutions of the considered problem under certain additions
conditions.

\section{Statement of the Main Solvability Result}

Let the operator $f\left( x,u\right) $ have the form 
\begin{equation}
f\left( x,u\right) =q\left( x\right) \left\vert u\right\vert ^{p-2}u+a\left(
x\right) \left\vert u\left( t,x\right) \right\vert ^{\widetilde{p}-2}u\left(
t,x\right)  \tag{$1.1$}
\end{equation}%
in the generalized sense, where $q\in W^{-1,p_{0}}\left( \Omega \right) $, $%
p_{0}\geq 2$ (it should be noted that either $p_{0}\equiv p_{0}\left(
p\right) $ or $p\equiv p\left( p_{0}\right) $), $a:\Omega \longrightarrow R$
and $u:Q\longrightarrow 
%TCIMACRO{\U{2102} }%
%BeginExpansion
\mathbb{C}
%EndExpansion
$ is an element of the space of sufficiently smooth functions that will be
determined below (see, Section 2). Consequently the function $q\left(
x\right) $ is a generalized function, which has singularity of the order 1.

We will set some necessary denotations. Everywhere later the expression of
the type $u\in L^{m}\left( R_{+};W_{0}^{1,2}\left( \Omega \right) \right)
\cap L^{2}\left( R_{+};W_{0}^{1,2}\left( \Omega \right) \right) $ $\equiv
L^{\left( 2,m\right) }\left( R_{+};W_{0}^{1,2}\left( \Omega \right) \right) $
for $u:Q\longrightarrow 
%TCIMACRO{\U{2102} }%
%BeginExpansion
\mathbb{C}
%EndExpansion
$ denote the following 
\begin{equation*}
\left( u_{1},u_{2}\right) \in \left( L^{m_{1}}\left( R_{+};W_{0}^{1,2}\left(
\Omega \right) \right) \right) ^{2}\equiv \left( L^{m_{1}}\left(
R_{+};W_{0}^{1,2}\left( \Omega \right) \right) ,L^{m_{1}}\left(
R_{+};W_{0}^{1,2}\left( \Omega \right) \right) \right)
\end{equation*}%
holds for $m_{1}\in \left[ 2,m\right] $, where $u\left( t,x\right) \equiv
u_{1}\left( t,x\right) +iu_{2}\left( t,x\right) $, $m\geq 2^{\ast }$
consequently we can set $u\left( t,x\right) \equiv \left( u_{1}\left(
t,x\right) ,u_{2}\left( t,x\right) \right) $, i.e. $u:Q\longrightarrow R^{2}$%
;

Everywhere later $\left\langle \cdot ,\cdot \right\rangle $ and $\left[
\cdot ,\cdot \right] $ denote the dual form for the pair $\left( X,X^{\ast
}\right) $ of the Banach space $X$ and its dual space $X^{\ast }$ , for
example, in the case when $X\equiv W_{0}^{1,2}\left( \Omega \right) $ and $%
X\equiv L^{m_{1}}\left( R_{+};W_{0}^{1,2}\left( \Omega \right) \right) $ we
have 
\begin{equation*}
\left( X,X^{\ast }\right) \equiv \left( \left( W_{0}^{1,2}\left( \Omega
\right) \right) ^{2},\left( W_{0}^{1,2}\left( \Omega \right) \right)
^{2}\right)
\end{equation*}%
and 
\begin{equation*}
\left( X,X^{\ast }\right) \equiv \left( \left( L^{m_{1}}\left(
R_{+};W_{0}^{1,2}\left( \Omega \right) \right) \right) ^{2},\left(
L^{m_{1}^{\prime }}\left( R_{+};W^{-1,2}\left( \Omega \right) \right)
\right) ^{2}\right) ,
\end{equation*}%
respectively, where $m_{1}^{\prime }=\frac{m_{1}}{m_{1}-1}$. In the other
words we will understand these expressions everywhere later as the following
representations 
\begin{equation*}
\left\langle g,w\right\rangle \equiv \underset{\Omega }{\int }g\left(
x\right) \overline{w}\left( x\right) dx,\ \ g_{j}\in W_{0}^{1,2}\left(
\Omega \right) ,\ w_{j}\in W^{-1,2}\left( \Omega \right) ,\ g\equiv
g_{1}+ig_{2};
\end{equation*}%
and%
\begin{eqnarray*}
\left[ g,w\right] &\equiv &\underset{0}{\overset{\infty }{\int }}\underset{%
\Omega }{\int }g\left( t,x\right) \overline{w}\left( t,x\right) dxdt,\ \
g_{j}\in L^{m_{1}}\left( R_{+};W_{0}^{1,2}\left( \Omega \right) \right) , \\
w_{j} &\in &L^{m_{1}^{\prime }}\left( R_{+};W^{-1,2}\left( \Omega \right)
\right)
\end{eqnarray*}%
respectively.

Assume the following conditions: (\textit{i}) let $\widetilde{p}<\frac{n+2}{%
n-2}$ if $n\geq 3$, $\widetilde{p}\in \left[ 2,\infty \right) $ if $n=1,2$
and $a\in L^{\infty }\left( \Omega \right) $;

(\textit{ii}) there exist numbers $k_{0}\geq 0$, $p_{2}\geq 1$ and $%
k_{1}\leq \min \left\{ 1;\frac{\widetilde{p}}{p}\right\} $ such that $1\leq
p_{2}<\frac{2n}{n-2}$, if \ $n\geq 3$, $2\leq p_{2}<\infty $, if $\ n=1,2$
and 
\begin{equation}
\left\langle a\left( x\right) \left\vert u\right\vert ^{\widetilde{p}-2}u,%
\overline{u}\right\rangle \geq -k_{0}\left\Vert u\right\Vert
_{p_{2}}^{2}-k_{1}\left\langle q\left( x\right) \left\vert u\right\vert
^{p-2}u,\overline{u}\right\rangle  \tag{1.2}
\end{equation}%
holds for any $u\in L^{\left( 2,m\right) }\left( R_{+};W_{0}^{1,2}\left(
\Omega \right) \right) $ and a.e. $t\geq 0$, where $1>C\left( 2,p_{2}\right)
^{2}\cdot k_{0}$\footnote{%
here $C\left( 2;p_{2}\right) $ is the constant of the known inequality of
Embedding Theorems for Sobolev spaces 
\begin{equation*}
\left\Vert \nabla u\right\Vert _{2}\geq C\left( 2;p_{2}\right) \left\Vert
u\right\Vert _{p_{2}},\ \forall u\in W_{0}^{1,2}\left( \Omega \right) .
\end{equation*}%
}.

\begin{definition}
A function 
\begin{equation*}
u\in L^{m}\left( R_{+};W_{0}^{1,2}\left( \Omega \right) \right) \cap
L^{2}\left( R_{+};W_{0}^{1,2}\left( \Omega \right) \right) \cap
\end{equation*}%
\begin{equation*}
\left\{ u\left\vert \ \frac{\partial u}{\partial t}\in L^{2}\left(
R_{+};L^{2}\left( \Omega \right) \right) \right. ;\ u\left( 0,x\right)
=u_{0}\right\}
\end{equation*}%
is called a solution of the problem (0.1) - (0.2) if the following equation
is fulfilled 
\begin{equation}
\underset{0}{\overset{\infty }{\int }}\ \underset{\Omega }{\int }\left[ i%
\frac{\partial u}{\partial t}-\Delta u+f\left( x,u\right) \right] ~\overline{%
\varphi }\ dxdt=\underset{0}{\overset{\infty }{\int }}\ \underset{\Omega }{%
\int }h\ \overline{\varphi }\ dxdt  \tag{1.3}
\end{equation}%
for any $\varphi \in L^{\left( 2,m\right) }\left( R_{+};W_{0}^{1,2}\left(
\Omega \right) \right) $.
\end{definition}

It should be noted that the sense in which equation (1.3) is to be
understood will be explained below (section 2). We have proved the following
result for the considered problem.

\begin{theorem}
Let the function $f$ have the representation ($1.1$) in the generalized
sense, where $q\in W_{p_{0}}^{-1}\left( \Omega \right) $ is a nonnegative
distribution (generalized function,\footnote{%
see, Definition 2 of the section 2}), $p_{0}=\frac{2n}{2\left( n-1\right)
-p\left( n-2\right) }$, $\frac{2\left( n-1\right) }{n-2}>$ $p>2$ if $n\geq 3$%
; $p_{0},p>2$ are arbitrary if $n=2$, and $p_{0},p\geq 2$ are arbitrary if $%
n=1$ (in particular, if $n=3$ then $2<p<4$ and $p_{0}=\frac{6}{4-p}$) and
conditions (i), (ii) are fulfilled. Then for any $h\in L^{2}\left( Q\right) $
and $u_{0}\in W_{0}^{1,2}\left( \Omega \right) $ the problem (0.1) - (0.2)
is solvable in $L^{\left( 2,m\right) }\left( R_{+};W_{0}^{1,2}\left( \Omega
\right) \right) \cap \left\{ u\left\vert \ \frac{\partial u}{\partial t}\in
L^{2}\left( R_{+};L^{2}\left( \Omega \right) \right) \right. ;\ u\left(
0,x\right) =u_{0}\right\} $.
\end{theorem}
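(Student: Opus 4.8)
The plan is to prove global solvability of the nonlinear Schrödinger problem (0.1)--(0.2) by reducing it to an abstract existence theorem (the general solvability result the authors attribute to [30], adapted in Section 3), and then verifying that the concrete operator associated with (1.1) satisfies the hypotheses of that abstract theorem. Writing $u=(u_1,u_2)$ so that the complex equation becomes a real system on $(W_0^{1,2}(\Omega))^2$, I would recast (1.3) as an operator equation $\mathcal{L}u+\mathcal{N}u=\mathcal{H}$, where $\mathcal{L}u=i\partial_t u-\Delta u$ is the linear (Schrödinger) part supplemented by the initial condition $u(0,\cdot)=u_0$, and $\mathcal{N}u=f(x,u)$ is the nonlinear Nemytskii-type operator given by (1.1). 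The functional-analytic setting is the evolution triple built on $V=W_0^{1,2}(\Omega)$, $H=L^2(\Omega)$, $V^\ast=W^{-1,2}(\Omega)$, with the solution sought in $L^{(2,m)}(R_+;V)$ subject to $\partial_t u\in L^2(R_+;H)$.

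The key steps, in order, are as follows. First I would establish that $\mathcal{N}$ is well defined and maps the solution space into the appropriate dual space: the subcritical growth bound $\widetilde p<\frac{n+2}{n-2}$ in condition (i) together with Sobolev embedding controls the term $a(x)|u|^{\widetilde p-2}u$, while the precise choice $p_0=\frac{2n}{2(n-1)-p(n-2)}$ in the theorem is exactly what is needed for $q\,|u|^{p-2}u$ to pair against test functions in $V$ when $q\in W^{-1,p_0}(\Omega)$ is understood in the generalized sense of Section 2 — this is where the fact that $q$ is a distribution of order $1$ is reconciled with the duality $\langle\cdot,\cdot\rangle$ defined above. Second, I would derive the a priori energy estimates by (formally) testing the equation with $\overline{u}$ and with $\overline{\partial_t u}$: pairing with $\overline{u}$ and taking the real part kills the conservative $i\partial_t u$ contribution and yields control of $\|\nabla u\|_2^2$ plus the nonlinear coercivity term, where coercivity is supplied precisely by the coupling inequality (1.2) with $k_1\le\min\{1,\widetilde p/p\}$, absorbing the negative $a$-term into the nonnegative $q$-term, and the smallness condition $1>C(2,p_2)^2 k_0$ lets the $-k_0\|u\|_{p_2}^2$ remainder be absorbed by the Sobolev term $C(2,p_2)^2\|\nabla u\|_2^2$. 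Third, these estimates, uniform on $R_+$, give the global-in-time bound that distinguishes this result from the finite-time $T<\infty$ theory of [31]; I would then invoke the abstract theorem from Section 3 to pass from a Galerkin/approximation scheme to an actual solution, using monotonicity or pseudo-monotonicity of the nonlinearity (nonnegativity of $q$ guarantees the principal nonlinear term is monotone) together with compactness from the embedding $V\hookrightarrow\hookrightarrow H$ to handle the lower-order terms.

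The main obstacle, I expect, is the rigorous treatment of the term $\langle q|u|^{p-2}u,\overline u\rangle$ when $q$ is only a distribution rather than an $L^\infty$ or $L^r$ function. The formal testing computations and the coercivity bound (1.2) must be given meaning in the duality pairing of Section 2, and one must check that for $u$ in the solution class the element $|u|^{p-2}u$ lands in the predual space $W_0^{1,p_0'}(\Omega)$ (or its appropriate analogue) so that $\langle q,\cdot\rangle$ is legitimately defined; this is exactly the role of the arithmetic relation tying $p_0$ to $p$ and of the nonnegativity hypothesis, which ensures $\langle q|u|^{p-2}u,\overline u\rangle\ge 0$ and hence that the term genuinely contributes coercivity rather than having an uncontrolled sign. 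A secondary technical difficulty is justifying the test-function manipulations at the level of the Galerkin approximations and then passing to the limit in the singular term, for which I would rely on the weak lower semicontinuity of the associated convex functional and on the special-class-of-functions machinery cited from [27, 28], rather than on pointwise arguments that are unavailable for a genuine distribution $q$.
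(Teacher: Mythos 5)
Your first two steps do track the paper: the arithmetic linking $p_0$ to $p$ is exactly what the paper's Proposition 1, Lemma 1 and Corollary 1 use to show $|u|^{p-2}u\,w\in W_0^{1,p_0'}(\Omega)$ so that the pairing with $q$ is legitimate, and your energy estimates using (1.2) and $1>C(2,p_2)^2k_0$ correspond to the paper's Propositions 2 and 4 (the paper tests with the single combined element $g(u)=\partial_t u+u$, because condition (c) of the abstract theorem requires one auxiliary mapping $g$; testing separately with $\overline{u}$ and $\partial_t\overline{u}$ is a cosmetic difference). The genuine gap is your third step. The abstract theorem of Section 3 is not a device for passing to the limit in a Galerkin scheme, and it has no monotonicity or pseudo-monotonicity hypothesis; besides continuity, boundedness and the coercivity condition (c), it requires that the image of the ball be closed, or that condition (d) or (d') (the P-property) hold. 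Verifying this is the decisive part of the paper's proof, and your proposal contains no analogue of it. Concretely, the paper's Proposition 3 proves, for $u,v$ sharing the initial datum $u_0$,
\[
\left\Vert f(u)-f(v)\right\Vert _{Y}\geq \left\Vert u-v\right\Vert _{2}(t)+\left\Vert \nabla (u-v)\right\Vert _{2}-M\max \left\{ \left\Vert u\right\Vert _{\widetilde{p}}^{\widetilde{p}-2};\left\Vert v\right\Vert _{\widetilde{p}}^{\widetilde{p}-2}\right\} \left\Vert u-v\right\Vert _{\widetilde{p}},
\]
and Proposition 5 combines this inequality with reflexivity of $X$ and the compact embedding $X\Subset L^{m}\left( R_{+};L^{\ell }\left( \Omega \right) \right)$ to show that $f(X)$ is closed in $Y$: if $h_k=f(u_k)\rightarrow h_0$ in $Y$, the preimages are bounded in $X$ by the condition (c) estimates, a subsequence converges weakly in $X$ and strongly in $L^{m}\left( R_{+};L^{\ell }\left( \Omega \right) \right)$, and the displayed inequality upgrades this to strong convergence in $L^{2}\left( R_{+};W_{0}^{1,2}\left( \Omega \right) \right)$, whence $f(u_0)=h_0$. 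Without this closedness step, the coercivity estimates alone yield nothing from Theorem 2 or Theorem 3.

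Moreover, the substitute you sketch would not go through as stated. The term $a|u|^{\widetilde{p}-2}u$ is not monotone (condition (1.2) exists precisely because $a$ may be negative), so monotonicity can at best concern the $q$-term; the operator $i\partial_t-\Delta$ is skew in its time part and does not fit the parabolic pseudo-monotone (Lions) framework you implicitly invoke; the compactness you cite, $W_0^{1,2}(\Omega)\Subset L^2(\Omega)$, is purely spatial, whereas what is needed is an Aubin--Lions-type compactness on the unbounded interval $R_+$ (the paper imports $X\Subset L^m(R_+;L^\ell(\Omega))$ from [30]; this is not automatic on $R_+$); and weak lower semicontinuity of $u\mapsto\langle q,|u|^{p}\rangle$ is a tool for minimization, not for identifying the weak limit of $q|u_k|^{p-2}u_k$ inside the equation. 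In a Galerkin scheme $f(u_k)$ does not converge strongly in $Y$, so the paper's mechanism based on the displayed inequality is unavailable there, and you would need a genuine Minty-type argument for the distributional term, which you have not formulated. As it stands, the proposal is missing the central idea of the paper's proof: verification of condition (d') via the Lipschitz-type lower bound of Proposition 3.
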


For the investigation of the considered problem we used some general
solvability theorems, which are conducted in section 3. We begin with
explanation of equation (1.3).

\section{The Solution Concept and Function Spaces}

So we will consider the case when the function $f\left( x,u\right) $ has the
form ($1.1$) where functions $a$, $q$ and $u$ are the same as above.
Consequently the function $q\left( x\right) $ is a generalized function,
which has singularity of order 1. Therefore we must understand the equation
(0.1) in the generalized function space sense, i.e. 
\begin{equation*}
\underset{\Omega }{\int }\left[ i\frac{\partial u}{\partial t}-\Delta
u+f\left( x,u\right) \right] ~\overline{\varphi }\left( x\right) dx\equiv
\end{equation*}%
\begin{equation*}
\underset{\Omega }{\int }\left[ i\frac{\partial u}{\partial t}-\Delta
u\left( t,x\right) +q\left( x\right) \left\vert u\left( t,x\right)
\right\vert ^{p-2}u\left( t,x\right) \right] ~\overline{\varphi }\left(
x\right) dx-
\end{equation*}%
\begin{equation}
\underset{\Omega }{\int }a\left( x\right) \left\vert u\left( t,x\right)
\right\vert ^{\widetilde{p}-2}u\left( t,x\right) ~\overline{\varphi }\left(
x\right) dx=\underset{\Omega }{\int }h\left( t,x\right) \overline{\varphi }%
\left( x\right) dx  \tag{2.1}
\end{equation}%
for any $\varphi \equiv \varphi _{1}+i\varphi _{2}$, $\varphi _{j}\in
D\left( \Omega \right) $, $j=1,2$ ,\ where $D\left( \Omega \right) $ is $%
C_{0}^{\infty }\left( \Omega \right) $ and \textrm{supp}$\varphi _{j}\subset
\Omega $ with corresponding topology. Here the equation (2.1) will be
understood in the sense of the space $L_{2}\left( R_{+}\right) $.

In the beginning we need to define the expression $q\ \left\vert
u\right\vert ^{p-2}u$. It is known that (see, for example, [21]) in the case
when $q\in W_{p_{0}}^{-1}\left( \Omega \right) $ we can represent it in the
form $q\left( x\right) \equiv \underset{k=0}{\overset{n}{\sum }}%
D_{k}q_{k}\left( x\right) $, $D_{k}\equiv \frac{\partial }{\partial x_{k}}$, 
$D_{0}\equiv I$, $q_{k}\in L_{p_{0}}\left( \Omega \right) $, $k=0,\overline{%
1,n}$ \ in the generalized function space sense. From here it follows that
if a solution of the considered problem belongs to the space which contains
to $W_{0}^{1,\widetilde{p}_{1}}\left( \Omega \right) $ for some number $%
\widetilde{p}_{1}>1$ then we can understand the term $q\ \left\vert
u\right\vert ^{p-2}u$ in the following sense 
\begin{equation}
\left\langle q\ \left\vert u\right\vert ^{p-2}u,\varphi \right\rangle \equiv 
\underset{\Omega }{\int }q\left( x\right) \left\vert u\left( t,x\right)
\right\vert ^{p-2}u\left( t,x\right) \overline{\varphi }\left( x\right) dx 
\tag{2.2}
\end{equation}%
for any $\varphi \equiv \varphi _{1}+i\varphi _{2}$, $\varphi _{j}\in
D\left( \Omega \right) $, $j=1,2$ and a.e. $t>0$. Therefore we must find the
needed number $\widetilde{p}_{1}\geq 2$. Namely we must find the relation
between the numbers $p_{0}$ and $\widetilde{p}_{1}$. So, taking into account
that for a function $u\in L^{m}\left( R_{+};W_{0}^{1,2}\left( \Omega \right)
\right) $, i.e. $\widetilde{p}_{1}=2$ (as $h\in L^{2}\left( Q\right) $ by
the assumption) we have $u\in L^{m}\left( R_{+};L^{\widetilde{p}_{1}^{\ast
}}\left( \Omega \right) \right) $, where $\widetilde{p}_{1}^{\ast }=2^{\ast
}=\frac{2n}{n-2}$ for $n\geq 3$ by virtue of the embedding theorem, from
(2.2) we get 
\begin{equation*}
\left\langle q\ \left\vert u\right\vert ^{p-2}u,\varphi \right\rangle \equiv 
\underset{\Omega }{\int }q\left( x\right) \left\vert u\left( t,x\right)
\right\vert ^{p-2}u\left( t,x\right) \overline{\varphi }\left( x\right) dx=
\end{equation*}%
\begin{equation*}
\underset{\Omega }{\int }\ \underset{k=0}{\overset{n}{\sum }}\frac{\partial 
}{\partial x_{k}}q_{k}\left( x\right) \left\vert u\left( t,x\right)
\right\vert ^{p-2}u\left( t,x\right) \overline{\varphi }\left( x\right) dx=-%
\underset{\Omega }{\int }\ \underset{k=1}{\overset{n}{\sum }}q_{k}\left\vert
u\right\vert ^{p-2}u\frac{\partial \overline{\varphi }}{\partial x_{k}}dx-
\end{equation*}%
\begin{equation}
\left( p-1\right) \underset{\Omega }{\int }\ \underset{k=1}{\overset{n}{\sum 
}}q_{k}\left\vert u\right\vert ^{p-2}\frac{\partial u}{\partial x_{k}}%
\overline{\varphi }dx+\underset{\Omega }{\int }\ q_{0}\left\vert
u\right\vert ^{p-2}u\varphi dx=I_{1}+I_{2}+\underset{\Omega }{\int }\
q_{0}\left\vert u\right\vert ^{p-2}u\overline{\varphi }dx  \tag{2.3}
\end{equation}%
by virtue of the generalized function theory.

Here and in what follows we assume $n\geq 3$. Because if $n=1,2$ then we can
choose arbitrary $p\geq 2$, as will be observed below. Let us take into
account that $\varphi _{j}\in D\left( \Omega \right) $ and $n\geq 3$, then
in order for the expression in the left part of (2.3) to have the meaning,
it is enough for us to take $1\leq p-1\leq \frac{2n\left( p_{0}-1\right) }{%
p_{0}\left( n-2\right) }$ for the integral $I_{1}$ and $0\leq p-2\leq \frac{%
n\left( p_{0}-2\right) }{p_{0}\left( n-2\right) }$ for the integral $I_{2}$.
Therefore if $2\leq p\leq \frac{3np_{0}-2\left( n+2p_{0}\right) }{%
p_{0}\left( n-2\right) }$ then the left part of (2.3) is defined. Now, let $%
\varphi _{j}\in W_{0}^{1,2}\left( \Omega \right) $, $j=1,2$. Then it is
sufficient to study one of the $I_{1}$ and $I_{2}$. Let us consider $I_{1}$,
from which we obtain, that $2\leq p\leq \frac{2np_{0}-2\left( n+p_{0}\right) 
}{p_{0}\left( n-2\right) }$, moreover we can choose $p\geq 2$ only if $%
p_{0}>n$. On the other hand, if we take into account that given $p,$ we
obtain $p_{0}=\frac{2n}{2\left( n-1\right) -p\left( n-2\right) }$, and
consequently in order for $p_{0}<\infty $ we must choose $2\left( n-1\right)
>p\left( n-2\right) $ or $p<\frac{2\left( n-1\right) }{n-2}$. In the case
when $n=3$ then $p<4$ and $p_{0}=\frac{6}{4-p}$.

Thus we determined under what conditions the left part of (2.3) is defined.
Hence that implies the correctness of the statement

\begin{proposition}
Assume $\widetilde{f}$ be an operator defined by expression $\widetilde{f}%
\left( u\right) \equiv q\ \left\vert u\right\vert ^{p-2}u$, where $q\in
W^{-1,p_{0}}\left( \Omega \right) $, and $u\in L^{\left( 2,m\right) }\left(
R_{+};W_{0}^{1,2}\left( \Omega \right) \right) $. If $2\leq p<\frac{2\left(
n-1\right) }{n-2}$ and $p_{0}=\frac{2n}{2\left( n-1\right) -p\left(
n-2\right) }$\ if $n\geq 3$ (in particular, if $n=3$ then $2\leq p<4$ and $%
p_{0}=\frac{6}{4-p}$) then $\widetilde{f}$ :$L^{\left( 2,m\right) }\left(
R_{+};W_{0}^{1,2}\left( \Omega \right) \right) \longrightarrow L^{2}\left(
R_{+};W^{-1,2}\left( \Omega \right) \right) $ is a bounded operator.
\end{proposition}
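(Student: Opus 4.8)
The plan is to build directly on the decomposition $q=\sum_{k=0}^{n}D_k q_k$ with $q_k\in L_{p_0}(\Omega)$ recalled just before the statement, so that for a.e.\ $t$ the action of $\widetilde f(u)(t,\cdot)$ on a test field $\varphi$ is given by the right-hand side of (2.3), namely $I_1+I_2+\int_\Omega q_0\,|u|^{p-2}u\,\overline\varphi\,dx$. The whole proposition then reduces to producing, for a.e.\ $t$, a bound of the form $|\langle\widetilde f(u)(t),\varphi\rangle|\le C\big(\sum_k\|q_k\|_{p_0}\big)\|u(t)\|_{W_0^{1,2}}^{\,p-1}\|\varphi\|_{W_0^{1,2}}$, which says exactly that $\widetilde f(u)(t,\cdot)\in W^{-1,2}(\Omega)$ with $\|\widetilde f(u)(t)\|_{W^{-1,2}}\le C'\|u(t)\|_{W_0^{1,2}}^{\,p-1}$, and then to integrate this in $t$.

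First I would estimate $I_1$ and the $q_0$-term. In both, the differentiation lands on $u$ or on $\varphi$ so as to leave a factor $|u|^{p-1}$, and I would apply H\"older with the three exponents $p_0$, $r=\frac{2p_0}{p_0-2}$, and $2$ (the last matched to $\partial_k\varphi\in L_2$, resp.\ $\varphi\in L_{2^\ast}$), which requires $|u|^{p-1}\in L_r$, i.e.\ $u\in L_{r(p-1)}$. The point where the precise value $p_0=\frac{2n}{2(n-1)-p(n-2)}$ is used is that this exponent computes to $r(p-1)=2^\ast=\frac{2n}{n-2}$ \emph{identically}, so the embedding $W_0^{1,2}(\Omega)\hookrightarrow L_{2^\ast}(\Omega)$ supplies exactly the control needed; the $q_0$-term is then only easier because $\Omega$ is bounded. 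For $I_2$ I would apply H\"older with the four factors $p_0,s,2,2^\ast$, matching $\partial_k u\in L_2$ and $\varphi\in L_{2^\ast}$, and verify that again $s(p-2)=2^\ast$ holds identically, so that $|u|^{p-2}\in L_s$ follows from the same embedding; when $p=2$ this factor is constant and $I_2$ collapses to the three-factor case with the boundary value $p_0=n$. Collecting the three bounds yields the pointwise-in-$t$ estimate above.

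It remains to handle the time variable. Squaring and integrating the pointwise estimate gives
\[
\int_0^\infty\|\widetilde f(u)(t)\|_{W^{-1,2}}^2\,dt\le C'^2\int_0^\infty\|u(t)\|_{W_0^{1,2}}^{\,2(p-1)}\,dt ,
\]
so I must show the right-hand side is finite. Here I would use $2\le 2(p-1)<2^\ast\le m$, where the strict inequality $2(p-1)<2^\ast$ is exactly the consequence of the hypothesis $p<\frac{2(n-1)}{n-2}$; the interpolation inclusion $L^2(R_+;W_0^{1,2})\cap L^m(R_+;W_0^{1,2})\subset L^{2(p-1)}(R_+;W_0^{1,2})$ then applies to $u$, making the integral finite and giving $\|\widetilde f(u)\|_{L^2(R_+;W^{-1,2})}\le C\,\|u\|_{L^{2(p-1)}(R_+;W_0^{1,2})}^{\,p-1}$. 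Since $\widetilde f$ is positively homogeneous of degree $p-1$, this is precisely the asserted boundedness, i.e.\ that bounded sets are mapped to bounded sets.

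The main obstacle I anticipate is not any single estimate but the exponent bookkeeping in the four-factor term $I_2$, together with the verification that \emph{every} $u$-exponent lands exactly on the critical value $2^\ast$. This tightness is what forces the exact formula for $p_0$ and leaves no room to spare, so any slip in the H\"older exponents would break the argument; by contrast the time-integration step is routine once the interpolation inclusion is in hand.
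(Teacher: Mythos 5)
Your proposal is correct and takes essentially the same route as the paper: the decomposition $q=\sum_{k=0}^{n}D_{k}q_{k}$ with $q_{k}\in L_{p_{0}}(\Omega )$, the integration-by-parts identity (2.3) producing $I_{1}$, $I_{2}$ and the $q_{0}$-term, and the H\"{o}lder/Sobolev exponent bookkeeping in which the choice $p_{0}=\frac{2n}{2(n-1)-p(n-2)}$ forces every $u$-exponent onto the critical value $2^{\ast }$. Your explicit pointwise-in-$t$ estimate and the time-integration step via $L^{2}\cap L^{m}\subset L^{2(p-1)}$ (using $2(p-1)<2^{\ast }\leq m$, which is exactly the paper's Remark 1 inclusion) only make rigorous what the paper leaves implicit, so the substance is the same.
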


So that exactly explains Proposition 1 and the representation (2.2) for any $%
\varphi _{j}\in W_{0}^{1,2}\left( \Omega \right) $ we consider the following
class of the functions $u:\Omega \longrightarrow C$ 
\begin{equation}
\emph{M}_{\eta ,W^{1,\beta }\left( \Omega \right) }\equiv \left\{ u\in
L\left( \Omega \right) ~\left\vert ~\eta \left( u\right) \in W^{1,\beta
}\left( \Omega \right) ,\ \eta \left( u\right) \equiv \left\vert
u\right\vert ^{\frac{\alpha }{\beta }}u\right. \right\} \equiv S_{1,\alpha
,\beta }\left( \Omega \right)  \tag{2.4}
\end{equation}%
where $\alpha \geq 0$, $\beta >1$ are certain numbers, $W^{1,\beta }\left(
\Omega \right) $ is a Sobolev space, i.e. we consider a class of the $pn$%
-spaces\footnote{%
These are a complete metric spaces; about their properties see, for example,
\par
Soltanov K. N. , Some nonlinear equations of the nonstable filtration type
and embedding theorems. J. Nonlinear Analysis : T.M. \& APPL. (2006), 65,
2103-2134 and references therein}.

It is not difficult to see that if $1\leq \alpha _{0}+\beta _{0}\leq \alpha
_{1}+\beta _{1}$, $0\leq \beta _{0}<\beta _{1}$, $\alpha _{1}\beta _{0}\leq
\alpha _{0}\beta _{1}$, $1\leq \beta _{1}$ then 
\begin{equation}
\underset{\Omega }{\int }\left\vert u\right\vert ^{\alpha _{0}}\underset{k=1}%
{\overset{n}{\sum }}\left\vert D_{k}u\right\vert ^{\beta _{0}}dx\leq c%
\underset{\Omega }{\int }\left\vert u\right\vert ^{\alpha _{1}}\underset{k=1}%
{\overset{n}{\sum }}\left\vert D_{k}u\right\vert ^{\beta _{1}}dx+c_{1} 
\tag{2.5}
\end{equation}%
holds for any $u\in C_{0}^{1}\left( \Omega \right) $, where\ constants $%
c,c_{1}\geq 0$ are independent from $u$.

Furthermore if we will introduce the space $\emph{M}_{\eta ,W_{0}^{1,\beta
}\left( \Omega \right) }\equiv \overset{0}{S}_{1,\alpha ,\beta }\left(
\Omega \right) \equiv S_{1,\alpha ,\beta }\left( \Omega \right) \cap \left\{
u\left( x\right) \left\vert \ u\left\vert \ _{\partial \Omega }\right.
=0\right. \right\} $ then we get

\begin{lemma}
Let $u\in W_{0}^{1,2}\left( \Omega \right) $ and the number $p$ satisfy the
inequation $2<p<\frac{2\left( n-1\right) }{n-2}$, $n\geq 3$. Then the
function $v\left( x\right) \equiv \eta \left( u\left( x\right) \right)
\equiv \left\vert u\left( x\right) \right\vert ^{p}$ belongs to $%
W_{0}^{1,\beta }\left( \Omega \right) $ for any $\beta \in \left[
1,p_{0}^{\prime }\right] $, where $p_{0}=\frac{2n}{2\left( n-1\right)
-p\left( n-2\right) }$ and $p_{0}^{\prime }=\frac{p_{0}}{p_{0}-1}=\frac{2n}{%
p\left( n-2\right) +2}$. (It is obvious: $u\in W_{0}^{1,2}\left( \Omega
\right) \Longrightarrow v\equiv \left\vert u\right\vert ^{p}\in
W_{0}^{1,\beta }\left( \Omega \right) $ for any $\beta \in \left[ 1,2\right) 
$ if $n=2$, and for any $\beta \in \left[ 1,2\right] $ if $n=1$.)
\end{lemma}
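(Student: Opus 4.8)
The plan is to prove the claim at the endpoint $\beta = p_{0}^{\prime}$ and then deduce it for all $\beta \in [1, p_{0}^{\prime}]$ via the inclusion $W_{0}^{1,p_{0}^{\prime}}(\Omega) \hookrightarrow W_{0}^{1,\beta}(\Omega)$, which is valid because $\Omega$ is bounded. First I would record the arithmetic facts on which everything rests: from $p_{0}^{\prime}=\frac{2n}{p(n-2)+2}$ one checks $1<p_{0}^{\prime}<2$ (the left inequality is exactly $p<\frac{2(n-1)}{n-2}$, the right one is $p>1$), and, decisively, the exponent identity $(p-1)p_{0}^{\prime}\cdot\frac{2}{2-p_{0}^{\prime}}=\frac{2n}{n-2}=2^{\ast}$ together with $p\,p_{0}^{\prime}<2^{\ast}$. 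These balances are precisely what let the critical Sobolev embedding $W_{0}^{1,2}(\Omega)\hookrightarrow L^{2^{\ast}}(\Omega)$ close the estimate exactly at $\beta=p_{0}^{\prime}$.

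Next comes the chain rule. Writing $u=(u_{1},u_{2}):\Omega\longrightarrow R^{2}$ and $\Phi(y)=|y|^{p}$, the map $\Phi$ is $C^{1}$ on $R^{2}$ for $p>1$ with $\nabla\Phi(y)=p|y|^{p-2}y$, so formally $\nabla v=p|u|^{p-2}(u_{1}\nabla u_{1}+u_{2}\nabla u_{2})$ and hence, by Cauchy--Schwarz, $|\nabla v|\leq p|u|^{p-1}|\nabla u|$ with $|\nabla u|^{2}=|\nabla u_{1}|^{2}+|\nabla u_{2}|^{2}$. To justify this for $u$ merely in $W_{0}^{1,2}(\Omega)$ I would choose $u_{j}\in (C_{0}^{\infty}(\Omega))^{2}$ with $u_{j}\to u$ in $W_{0}^{1,2}$, for which $v_{j}:=|u_{j}|^{p}$ is $C^{1}$ with compact support and the formula for $\nabla v_{j}$ holds pointwise.

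Then the integrability bound. Applying H\"{o}lder with the conjugate pair $\frac{2}{\beta},\frac{2}{2-\beta}$ (admissible because $\beta=p_{0}^{\prime}<2$) gives $\|\nabla v\|_{\beta}^{\beta}\leq p^{\beta}\|\nabla u\|_{2}^{\beta}\|u\|_{2^{\ast}}^{(p-1)\beta}$ exactly because $(p-1)\beta\cdot\frac{2}{2-\beta}=2^{\ast}$, while $\|v\|_{\beta}^{\beta}=\||u|^{p}\|_{\beta}^{\beta}\leq C\|u\|_{2^{\ast}}^{p\beta}$ since $p\beta\leq 2^{\ast}$ and $\Omega$ is bounded. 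Both right-hand sides are finite by the embedding, so $\{v_{j}\}$ is bounded in $W^{1,\beta}$; applied to differences it shows $v_{j}\to v$ in $W^{1,\beta}$. As each $v_{j}$ has compact support in $\Omega$, the limit lies in $W_{0}^{1,\beta}(\Omega)$, which also yields the vanishing trace. Inequality (2.5) may be invoked as a convenient packaging of these product estimates.

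The main obstacle is the passage to the limit in the nonlinear gradient term, i.e. establishing $p|u_{j}|^{p-2}u_{j}\nabla u_{j}\to p|u|^{p-2}u\nabla u$ strongly in $L^{\beta}$; this does not follow from weak convergence alone. I would combine strong $L^{2}$ convergence of $\nabla u_{j}$, strong $L^{2^{\ast}}$ convergence of $u_{j}$ (Sobolev), continuity of the Nemytskii map $w\mapsto |w|^{p-2}w$ between the relevant Lebesgue spaces, and a Vitali / uniform-integrability argument (or extraction of an a.e. convergent subsequence dominated through the H\"{o}lder bound). The sharp exponent bookkeeping at $\beta=p_{0}^{\prime}$ is exactly what makes each of these convergences land in the correct space. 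For $n=1,2$ the same scheme applies with more room, since $W_{0}^{1,2}(\Omega)$ embeds into every $L^{q}$ with $q<\infty$ (respectively into $L^{\infty}$), giving the stated ranges $\beta\in[1,2)$ and $\beta\in[1,2]$.
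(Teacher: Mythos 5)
Your proposal is correct and takes essentially the same route as the paper: reduction to the endpoint $\beta = p_{0}^{\prime}$ using boundedness of $\Omega$, the conjugate pair $\frac{2}{\beta},\frac{2}{2-\beta}$, the key identity $(p-1)p_{0}^{\prime}\frac{2}{2-p_{0}^{\prime}}=\frac{2n}{n-2}$, and the Sobolev embedding $W_{0}^{1,2}(\Omega)\hookrightarrow L^{2^{\ast}}(\Omega)$. The only differences are presentational: the paper packages the gradient estimate as an $\varepsilon$-Young inequality rather than H\"{o}lder, and it leaves implicit (via inequality (2.5) for $C_{0}^{1}$ functions and density) the smooth-approximation and limit-passage step that you spell out through Nemytskii continuity and a Vitali argument.
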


\begin{proof}
We have 
\begin{equation*}
\underset{\Omega }{\int }~\left\vert u\right\vert ^{\left( p-1\right) \beta
}\left\vert D_{k}u\right\vert ^{\beta }dx\leq k\left( \varepsilon \right) 
\underset{\Omega }{\int }~\left\vert D_{k}u\right\vert ^{2}dx+\varepsilon 
\underset{\Omega }{\int }~\left\vert u\right\vert ^{\left( p-1\right) \beta 
\frac{2}{2-\beta }}dx
\end{equation*}%
for any $u\in W\ ^{1,2}\left( \Omega \right) $ and $\beta \in \left[
1,p_{0}^{\prime }\right] $. It is enough to consider the case $\beta
=p_{0}^{\prime }=\frac{2n}{p\left( n-2\right) +2}$, because $\Omega \subset
R^{n}$ is a bounded domain with sufficiently smooth boundary $\partial
\Omega $. So, from here we get 
\begin{equation*}
\underset{\Omega }{\int }~\left\vert u\right\vert ^{\left( p-1\right) \beta
}\left\vert D_{k}u\right\vert ^{\beta }dx\leq c\left( \varepsilon \right) 
\underset{\Omega }{\int }~\left\vert D_{k}u\right\vert ^{2}dx+\varepsilon 
\underset{\Omega }{\int }~\left\vert u\right\vert ^{\left( p-1\right)
p_{0}^{\prime }\frac{2}{2-p_{0}^{\prime }}}dx+c_{0}.
\end{equation*}

Then $\left( p-1\right) p_{0}^{\prime }\frac{2}{2-p_{0}^{\prime }}=\frac{2n}{%
n-2}$ holds under the conditions of the Lemma 1. Consequently, we obtain $%
u\in W_{0}^{1,2}\left( \Omega \right) \Longrightarrow v\equiv \left\vert
u\right\vert ^{p}\in W_{0}^{1,\beta }\left( \Omega \right) $ for any $\beta
\in \left[ 1,p_{0}^{\prime }\right] $, with choosing $\varepsilon >0$
sufficiently small and with use the Embedding Theorem for Sobolev spaces.
\end{proof}

\begin{corollary}
Let $u,w\in W_{0}^{1,2}\left( \Omega \right) $ and the number $p$ is such
that $2<p<\frac{2\left( n-1\right) }{n-2}$, $n\geq 3$. Then the function $%
v\left( x\right) \equiv \left\vert u\left( x\right) \right\vert
^{p-2}u\left( x\right) w\left( x\right) $ belongs to $W_{0}^{1,\beta }\left(
\Omega \right) $ (i.e. $v\in W_{0}^{1,\beta }\left( \Omega \right) $) for
any $\beta \in \left[ 1,p_{0}^{\prime }\right] $, where $p_{0}=\frac{2n}{%
2\left( n-1\right) -p\left( n-2\right) }$ and $p_{0}^{\prime }=\frac{p_{0}}{%
p_{0}-1}$.
\end{corollary}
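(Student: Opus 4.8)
The plan is to deduce the corollary from Lemma 1 by writing out the weak gradient of $v$ explicitly and estimating each resulting term by Hölder's inequality, thereby reducing everything to integrals of the type already controlled in the proof of Lemma 1. Since $p>2$, the map $s\mapsto |s|^{p-2}s$ is continuous and belongs to $C^{1}$ (its derivative $(p-1)|s|^{p-2}$ is continuous, vanishing at $s=0$ because $p-2>0$), so for $u,w\in W_{0}^{1,2}(\Omega)$ the product and chain rules give, writing the computation for real-valued $u$,
\begin{equation*}
D_{k}v=(p-1)\left\vert u\right\vert ^{p-2}(D_{k}u)\,w+\left\vert u\right\vert ^{p-2}u\,(D_{k}w),\qquad k=1,\dots,n;
\end{equation*}
the complex case follows componentwise through $u\equiv u_{1}+iu_{2}$ together with the pointwise bound $\left\vert D_{k}(\left\vert u\right\vert ^{p-2}u)\right\vert \leq (p-1)\left\vert u\right\vert ^{p-2}\left\vert D_{k}u\right\vert $. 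First I would justify this formula and the membership $v\in W_{0}^{1,\beta}(\Omega)$ rigorously by approximating $u,w$ in $W_{0}^{1,2}(\Omega)$ by functions $u_{j},w_{j}\in C_{0}^{1}(\Omega)$: the products $v_{j}\equiv \left\vert u_{j}\right\vert ^{p-2}u_{j}\,w_{j}$ lie in $C_{0}^{1}(\Omega)\subset W_{0}^{1,\beta}(\Omega)$, and the estimates below show that $(v_{j})$ is Cauchy in $W^{1,\beta}(\Omega)$, so the limit inherits both the derivative formula and the zero boundary values.

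For the $L^{\beta}$-norms of $v$ and of the two gradient terms I would apply Hölder's inequality so as to place each factor into the space where its regularity is known: $D_{k}u,D_{k}w\in L^{2}(\Omega)$ and, by the Sobolev embedding $W_{0}^{1,2}(\Omega)\hookrightarrow L^{2^{\ast}}(\Omega)$ with $2^{\ast}=\frac{2n}{n-2}$, also $u,w\in L^{2^{\ast}}(\Omega)$. The term $\left\vert u\right\vert ^{p-2}u\,D_{k}w$ is handled exactly as in Lemma 1, since the estimate $\int_{\Omega}\left\vert u\right\vert ^{(p-1)\beta}\left\vert D_{k}w\right\vert ^{\beta}\,dx\leq \left\Vert \left\vert u\right\vert ^{(p-1)\beta}\right\Vert _{(2/\beta)^{\prime}}\left\Vert D_{k}w\right\Vert _{2}^{\beta}$ requires precisely $(p-1)\beta\frac{2}{2-\beta}\leq 2^{\ast}$, which is the condition of Lemma 1 and holds for $\beta\in\lbrack 1,p_{0}^{\prime}]$. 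For the remaining term $\left\vert u\right\vert ^{p-2}(D_{k}u)\,w$ I would use a three-factor Hölder inequality with exponents chosen so that $\left\vert D_{k}u\right\vert ^{\beta}\in L^{2/\beta}$ and $\left\vert w\right\vert ^{\beta}\in L^{2^{\ast}/\beta}$; the exponent left for $\left\vert u\right\vert ^{(p-2)\beta}$ is then $a=2^{\ast}/((p-2)\beta)$, and a short computation shows that the admissibility requirement $(p-2)\beta a\leq 2^{\ast}$ collapses once again to the single inequality $(p-1)\beta\frac{2}{2-\beta}\leq 2^{\ast}$.

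It therefore suffices to treat the critical exponent $\beta=p_{0}^{\prime}=\frac{2n}{p(n-2)+2}$, for which $(p-1)p_{0}^{\prime}\frac{2}{2-p_{0}^{\prime}}=2^{\ast}$ exactly, as recorded in Lemma 1; the general case $\beta\in\lbrack 1,p_{0}^{\prime}]$ then follows because $\Omega$ is bounded and hence $L^{p_{0}^{\prime}}(\Omega)\hookrightarrow L^{\beta}(\Omega)$. I expect the main obstacle to be the bookkeeping in the three-factor Hölder estimate for $\left\vert u\right\vert ^{p-2}(D_{k}u)\,w$: specifically, verifying that the three Hölder exponents are all $\geq 1$ (equivalently that $1/a=\frac{(p-2)\beta}{2^{\ast}}>0$, which uses $p>2$), that they sum to $1$ at $\beta=p_{0}^{\prime}$, and that the resulting threshold coincides with Lemma 1's rather than being strictly smaller. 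Once this is confirmed, combining the three estimates yields $v\in W^{1,\beta}(\Omega)$ with an explicit bound in terms of $\left\Vert u\right\Vert _{W_{0}^{1,2}}$ and $\left\Vert w\right\Vert _{W_{0}^{1,2}}$, and the approximation argument of the first paragraph upgrades this to $v\in W_{0}^{1,\beta}(\Omega)$.
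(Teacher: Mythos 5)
Your proof is correct. Note that the paper gives no proof of this Corollary at all: it is stated immediately after Lemma 1 as a direct consequence, so the only comparison available is with the Lemma 1 argument the reader is expected to reuse, and your write-up is the natural (and, as far as I can tell, the intended) completion of it. Your term $\left\vert u\right\vert ^{p-2}u\,D_{k}w$ reproduces the Lemma 1 integral $\int_{\Omega }\left\vert u\right\vert ^{(p-1)\beta }\left\vert D_{k}w\right\vert ^{\beta }dx$ with $D_{k}u$ replaced by $D_{k}w$ (the paper controls it by Young's inequality plus the Sobolev embedding; your two-factor H\"{o}lder is equivalent). The genuinely new ingredient, which the paper never writes down, is the three-factor H\"{o}lder estimate for the cross term $\left\vert u\right\vert ^{p-2}(D_{k}u)\,w$, and your bookkeeping there checks out: the admissibility condition $\frac{(p-2)\beta }{2^{\ast }}+\frac{\beta }{2}+\frac{\beta }{2^{\ast }}\leq 1$ rearranges exactly to $(p-1)\beta \frac{2}{2-\beta }\leq 2^{\ast }$, i.e.\ Lemma 1's threshold, with equality at $\beta =p_{0}^{\prime }$, so the Corollary holds on precisely the same range of $\beta $ as the Lemma. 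You are also right to flag the chain rule as a point needing justification, since the derivative $(p-1)\left\vert s\right\vert ^{p-2}$ is unbounded for $p>2$ and the naive product rule for $W_{0}^{1,2}$ functions is not automatic; the paper silently ignores this. The only place where your argument is thinner than it should be is the assertion that the approximating sequence $v_{j}=\left\vert u_{j}\right\vert ^{p-2}u_{j}w_{j}$ is Cauchy in $W^{1,\beta }(\Omega )$: the displayed H\"{o}lder bounds control each $v_{j}$ but not directly the differences $v_{j}-v_{l}$, so you should either add the elementary pointwise inequality $\bigl\vert \left\vert s\right\vert ^{p-2}s-\left\vert t\right\vert ^{p-2}t\bigr\vert \leq C\left( \left\vert s\right\vert ^{p-2}+\left\vert t\right\vert ^{p-2}\right) \left\vert s-t\right\vert $ and re-run the same H\"{o}lder estimates on the differences, or pass to an a.e.\ convergent subsequence and conclude by Vitali's theorem using the uniform integrability that your bounds already provide; either repair is routine and does not change the structure of the proof.
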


Now we introduce a concept of the nonnegative generalized function

\begin{definition}
A generalized function $q\left( x\right) $ is called a non-negative
distribution (\textquotedblleft $q\geq 0$\textquotedblright ) iff $%
\left\langle q,\ \varphi \right\rangle \geq 0$\ holds for any non-negative
test function $\varphi \in D\left( \Omega \right) $.
\end{definition}

\section{General Solvability Results}

Let $X,Y$ be reflexive Banach spaces and $X^{\ast },Y^{\ast }$ their dual
spaces, moreover $Y$ is a reflexive Banach space with strictly convex norm
together with $Y^{\ast }$ (see, for example, references of [29]). Let $%
f:D\left( f\right) \subseteq X\longrightarrow Y$ be an operator. So we
conduct variant of the main result of [29] (the more general cases can be
seen in [30]). Consider the following conditions:

\textit{(a)} $X,Y$ be Banach spaces such as above and $f:D\left( f\right)
\subseteq X\longrightarrow Y$ be a continuous mapping, moreover there is the
closed ball $B_{r_{0}}^{X}\left( x_{0}\right) \subset X$ of an element $%
x_{0} $ of $D\left( f\right) $ that belongs to $D\left( f\right) $ ($%
B_{r_{0}}\left( x_{0}\right) $ $\subseteq D\left( f\right) $)\footnote{%
Here it is enough assume: there is the closed neighborhood $U_{\delta
}\left( x_{0}\right) $ $\subset X$ of an element $x_{0}$ of $D\left(
f\right) $ that belongs to $D\left( f\right) $ ($U_{\delta }\left(
x_{0}\right) $ $\subseteq D\left( f\right) $) and $U_{\delta }\left(
x_{0}\right) $ is equivalence to $B_{r_{0}}^{X}\left( x_{0}\right) $ for
some numbers $\delta ,r_{0}>0$. Consequently, it is enough account that $%
U_{r_{0}}\left( x_{0}\right) \equiv B_{r_{0}}^{X}\left( x_{0}\right) $.};

Let the following conditions are fulfilled on the closed ball $%
B_{r_{0}}^{X}\left( x_{0}\right) $ $\subseteq D\left( f\right) $ : $\ $

\textit{(b) }$f$ is a bounded mapping on the ball $B_{r_{0}}^{X}\left(
x_{0}\right) $, i.e. $\left\Vert f\left( x\right) \right\Vert _{Y}\leq \mu
\left( \left\Vert x\right\Vert _{X}\right) $ holds for $\forall x\in
B_{r_{0}}^{X}\left( x_{0}\right) $ where $\mu :R_{+}^{1}\longrightarrow
R_{+}^{1}$ is a continuous function;

\textit{(c)} there is a mapping $g:D\left( g\right) \subseteq
X\longrightarrow Y^{\ast }$, and a continuous function $\nu
:R_{+}^{1}\longrightarrow R^{1}$ nondecreasing for $\tau \geq \tau _{0}$
such that $D\left( f\right) \subseteq D\left( g\right) $, and for any $%
S_{r}^{X}\left( x_{0}\right) \subset B_{r_{0}}^{X}\left( x_{0}\right) $, $%
0<r\leq r_{0}$, closure of\textrm{\ }$g\left( S_{r}^{X}\left( x_{0}\right)
\right) \equiv S_{r}^{Y^{\ast }}\left( 0\right) $, $S_{r}^{X}\left(
x_{0}\right) \subseteq g^{-1}\left( S_{r}^{Y^{\ast }}\left( 0\right) \right) 
$ 
\begin{equation}
\left\langle f\left( x\right) -f\left( x_{0}\right) ,g\left( x\right)
\right\rangle \geq \nu \left( \left\Vert x-x_{0}\right\Vert _{X}\right)
\left\Vert x-x_{0}\right\Vert _{X},\text{ }  \tag{3.1}
\end{equation}%
\begin{equation*}
\text{a.e. }x\in B_{r_{0}}^{X}\left( x_{0}\right) \quad \&\ \nu \left(
r_{0}\right) \geq \delta _{0}>0
\end{equation*}%
holds, here $\delta _{0}>0$, $\tau _{0}\geq 0$ are constants;

\textit{(d)} almost each $\widetilde{x}\in intB_{r_{0}}^{X}\left(
x_{0}\right) $ possesses a neighborhood $V_{\varepsilon }\left( \widetilde{x}%
\right) $, $\varepsilon \geq \varepsilon _{0}>0$ such that the inequation 
\begin{equation}
\left\Vert f\left( x_{2}\right) -f\left( x_{1}\right) \right\Vert _{Y}\geq
\Phi \left( \left\Vert x_{2}-x_{1}\right\Vert _{X},\widetilde{x},\varepsilon
\right) +\psi \left( \left\Vert x_{1}-x_{2}\right\Vert _{Z},\widetilde{x}%
,\varepsilon \right)  \tag{3.2}
\end{equation}%
holds for any $x_{1},x_{2}\in V_{\varepsilon }\left( \widetilde{x}\right)
\cap B_{r_{0}}^{X}\left( x_{0}\right) $, where $\Phi \left( \tau ,\widetilde{%
x},\varepsilon \right) \geq 0$ is a continuous function of $\tau $ and $\Phi
\left( \tau ,\widetilde{x},\varepsilon \right) =0\Leftrightarrow \tau =0$
(in particular, maybe $\widetilde{x}=0$, $\varepsilon =\varepsilon
_{0}=r_{0} $ and $V_{\varepsilon }\left( \widetilde{x}\right)
=V_{r_{0}}\left( x_{0}\right) \equiv B_{r_{0}}^{X}\left( x_{0}\right) $,
consequently$\ \Phi \left( \tau ,\widetilde{x},\varepsilon \right) \equiv
\Phi \left( \tau ,x_{0},r_{0}\right) $ on $B_{r_{0}}^{X}\left( x_{0}\right) $%
), $Z$ is a Banach space and the inclusion $X\subset Z$ is compact, and $%
\psi \left( \cdot ,\widetilde{x},\varepsilon \right)
:R_{+}^{1}\longrightarrow R^{1}$ is a continuous function at $\tau $ and $%
\psi \left( 0,\widetilde{x},\varepsilon \right) =0$;

\textit{(d') } $f$ possesses the \textrm{P-property} on the ball $%
B_{r_{0}}^{X}\left( x_{0}\right) $, i.e. for any precompact subset $%
M\subseteq \func{Im}\ f$ of $Y$ there exists a (general) subsequence $%
M_{0}\subset M$ such that there exists a precompact subset $G$ of $%
B_{r_{0}}^{X}\left( x_{0}\right) \subset X$ that satisfies the inclusions $%
f^{-1}\left( M_{0}\right) \subseteq G$ and $f\left( G\cap D\left( f\right)
\right) \supseteq M_{0}$.

\begin{theorem}
Let the conditions (a), \textit{(b),} \textit{(c)} be fulfilled. Then if the
image $f\left( B_{r_{0}}^{X}\left( x_{0}\right) \right) $ of the ball $%
B_{r_{0}}^{X}\left( x_{0}\right) $ is closed (or is fulfilled the condition
(d) or (d')), then $f\left( B_{r_{0}}^{X}\left( x_{0}\right) \right) $ is a
bodily subset (i.e. with nonempty interior) of $Y$, moreover $f\left(
B_{r_{0}}^{X}\left( x_{0}\right) \right) $ contains a bodily subset $M$ that
has the form 
\begin{equation*}
M\equiv \left\{ y\in Y\left\vert \ \left\langle y,g\left( x\right)
\right\rangle \leq \left\langle f\left( x\right) ,g\left( x\right)
\right\rangle ,\right. \forall x\in S_{r_{0}}^{X}\left( x_{0}\right)
\right\} .
\end{equation*}
\end{theorem}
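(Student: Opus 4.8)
The plan is to prove the statement in two stages: first that $M$ is bodily, and then that $\operatorname{int} M\subseteq f\left(B_{r_0}^X(x_0)\right)$. Together these give that $f\left(B_{r_0}^X(x_0)\right)$ is bodily and, via the compactness hypotheses, contains $M$. For the first stage I would argue directly from (3.1) and (c). By (c), every $x\in S_{r_0}^X(x_0)$ satisfies $\left\Vert g(x)\right\Vert_{Y^\ast}=r_0$ and $\left\langle f(x)-f(x_0),g(x)\right\rangle\ge\nu(r_0)\,r_0\ge\delta_0 r_0$. Hence for any $z\in Y$ with $\left\Vert z\right\Vert_Y<\delta_0$ and any $x\in S_{r_0}^X(x_0)$,
\[
\left\langle f(x_0)+z,g(x)\right\rangle=\left\langle f(x_0),g(x)\right\rangle+\left\langle z,g(x)\right\rangle\le\left\langle f(x_0),g(x)\right\rangle+\left\Vert z\right\Vert_Y r_0<\left\langle f(x),g(x)\right\rangle ,
\]
so $f(x_0)+z\in M$. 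Thus the open ball $B_{\delta_0}^Y\!\left(f(x_0)\right)$ lies in $M$, which proves $M$ is bodily and that $f(x_0)$ is an interior point.

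For the second stage I would fix $y\in\operatorname{int} M$ and seek $x\in B_{r_0}^X(x_0)$ with $f(x)=y$. The content of $y\in\operatorname{int} M$ is a uniform boundary inequality: there is $\varepsilon>0$ with $\left\langle f(x)-y,g(x)\right\rangle\ge\varepsilon$ for every $x\in S_{r_0}^X(x_0)$. Let $J:Y\to Y^\ast$ be the normalized duality mapping, which is a homeomorphism with $J(v)=0\iff v=0$ because $Y$ and $Y^\ast$ are reflexive and strictly convex. Then finding a zero of $f(\cdot)-y$ is equivalent to finding a zero of $\Psi(x):=J\!\left(f(x)-y\right)\in Y^\ast$. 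In a finite-dimensional section I would evaluate a topological degree through the homotopy $H(x,t)=(1-t)g(x)+t\Psi(x)$; pairing with $f(x)-y$ yields
\[
\left\langle f(x)-y,H(x,t)\right\rangle=(1-t)\left\langle f(x)-y,g(x)\right\rangle+t\left\Vert f(x)-y\right\Vert_Y^2 ,
\]
which is strictly positive on $S_{r_0}^X(x_0)$ unless $f(x)=y$ already holds there. Hence either a solution sits on the sphere, or $H(\cdot,t)$ never vanishes on the sphere and $\deg(\Psi,B_{r_0},0)=\deg(g,B_{r_0},0)$. The radius-matching in (c), namely that each concentric sphere is carried onto the concentric sphere of equal radius, forces $\deg(g,B_{r_0},0)=\pm1\neq0$, so $\Psi$, and therefore $f(\cdot)-y$, has a zero in $B_{r_0}^X(x_0)$.

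The hard part will be to make this degree argument legitimate in the infinite-dimensional Banach setting, and this is precisely where the alternative hypotheses enter. I would realize it through a Galerkin scheme on an increasing family of finite-dimensional subspaces containing $x_0$, solve the projected equations by the Brouwer-type step above, and then pass to the limit. That limit passage is the genuine obstacle: the approximants $x_n$ are bounded and have a weak limit, but to conclude $f(x_n)\to y$ strongly and $f(x^\ast)=y$ I would invoke the P-property (d'), which supplies a precompact set capturing a subsequence of the $f^{-1}$-preimages and hence a strongly convergent $x_{n_k}$; condition (d), through the modulus $\Phi$ together with the compact embedding $X\subset Z$, provides the same compactness; and closedness of $f\left(B_{r_0}^X(x_0)\right)$ lets one identify the limit $y$ directly as a value of $f$. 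A secondary technical point, to be handled by approximating $g$ inside each finite-dimensional section, is that (c) only gives $g\!\left(S_r^X(x_0)\right)$ dense in $S_r^{Y^\ast}(0)$, so the nonvanishing of $\deg(g)$ must be read from this radial surjectivity rather than from an exact spherical image. Once $\operatorname{int} M\subseteq f\left(B_{r_0}^X(x_0)\right)$ is secured, bodiliness of the image follows from the first stage, and the boundary points of $M$ are recovered by the same closure/compactness argument used for the limit.
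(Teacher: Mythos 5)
First, a point of calibration: the paper itself contains no proof of this theorem --- Section 3 presents it as a quoted variant of the main result of [29] (with [30] for generalizations) --- so your attempt can only be judged against the statement, not against an argument in the text. Your first stage is correct: since (c) places $g\left( x\right) $ on $S_{r_{0}}^{Y^{\ast }}\left( 0\right) $ for $x\in S_{r_{0}}^{X}\left( x_{0}\right) $, inequality (3.1) with $\nu \left( r_{0}\right) \geq \delta _{0}$ gives, for $\left\Vert z\right\Vert _{Y}<\delta _{0}$,
\begin{equation*}
\left\langle f\left( x_{0}\right) +z,g\left( x\right) \right\rangle \leq
\left\langle f\left( x_{0}\right) ,g\left( x\right) \right\rangle
+\left\Vert z\right\Vert _{Y}r_{0}<\left\langle f\left( x_{0}\right)
,g\left( x\right) \right\rangle +\delta _{0}r_{0}\leq \left\langle f\left(
x\right) ,g\left( x\right) \right\rangle ,
\end{equation*}
so $B_{\delta _{0}}^{Y}\left( f\left( x_{0}\right) \right) \subset M$ and $M$ is bodily; likewise your observation that $y\in \operatorname{int}M$ yields the uniform bound $\left\langle f\left( x\right) -y,g\left( x\right) \right\rangle \geq \rho r_{0}$ on the sphere is right.

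The second stage, which carries the entire covering claim $M\subseteq f\left( B_{r_{0}}^{X}\left( x_{0}\right) \right) $, has a genuine gap at its pivot: the assertion that the radius-matching in (c) forces $\deg \left( g,B_{r_{0}},0\right) =\pm 1$ is false. A continuous map can carry every concentric sphere \emph{onto} the concentric sphere of equal radius and still have degree zero: in $\mathbb{R}^{2}$ take $g\left( re^{i\theta }\right) =re^{i\phi \left( \theta \right) }$, where $\phi :S^{1}\rightarrow S^{1}$ is a degree-zero surjection (traverse the circle once forward, then once backward). This $g$ satisfies everything (c) asks of $g$ --- indeed, taking $f\equiv g$, $x_{0}=0$, $\nu \left( \tau \right) =\tau $, conditions (a), (b), (c) all hold and the image of the ball is closed --- yet $\deg \left( g,B_{r},0\right) =\deg \phi =0$ for every $r$, so your homotopy identity $\deg \left( \Psi ,\cdot \right) =\deg \left( g,\cdot \right) $ returns $0$ and produces no zero of $\Psi =J\left( f\left( \cdot \right) -y\right) $ (in that example the conclusion of the theorem happens to hold, but not by your argument). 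Nonzero degree would require structure (c) does not supply, e.g.\ oddness of $g$ about $x_{0}$ (Borsuk) or $g$ a homeomorphism onto its image. Separately, the infinite-dimensional realization you defer is not routine bookkeeping: $g$ and $\Psi $ take values in $Y^{\ast }$ while your positivity identity pairs them against $f\left( x\right) -y\in Y$, and this pairing does not descend to finite-dimensional Galerkin sections (the usual duality trick works for maps $X\rightarrow X^{\ast }$, not for a triple $f:X\rightarrow Y$, $g:X\rightarrow Y^{\ast }$); moreover (d$'$) yields compactness of preimages only for precompact subsets of $\operatorname{Im}f$, and your approximants are not known to form one. These two issues --- the degree count and the passage to the limit --- are precisely where the cited references must do their real work, and they remain open in your sketch.
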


Now we lead a solvability theorem for the nonlinear equation in Banach
spaces, which is proved using Theorem 2. Let $F_{0}:D\left( F\right)
\subseteq X$ $\longrightarrow Y$ and $F_{1}:D\left( F_{1}\right) \subseteq X$
$\longrightarrow Y$ be some nonlinear mappings such that $D\left(
F_{0}\right) \cap D\left( F_{1}\right) =G\subseteq X$ and $G\neq \varnothing 
$. Consider the following equation 
\begin{equation}
F\left( x\right) \equiv F_{0}\left( x\right) +F_{1}\left( x\right) =y,\quad
y\in Y  \tag{3.3}
\end{equation}%
where $y$ is an arbitrary element of $Y$.

Let $B_{r}^{X}\left( x_{0}\right) \subseteq D\left( F_{0}\right) \cap
D\left( F_{1}\right) \subseteq X$ be the closed ball, $r>0$ be a number.
Consider the following conditions:

1) $F_{0}:B_{r}^{X}\left( x_{0}\right) $ $\longrightarrow Y$ is a bounded
continuous operator together with its inverse operator $F_{0}^{-1}$, (as $%
F_{0}^{-1}:$ $D\left( F_{0}^{-1}\right) \subseteq $ $Y$ $\longrightarrow $ $%
X $);

2) $F_{1}:B_{r}^{X}\left( x_{0}\right) $ $\longrightarrow Y$ is a nonlinear
continuous operator;

3) There are continuous functions $\mu _{i}:R_{+}^{1}\longrightarrow
R_{+}^{1}$ , $i=1,2$ and $\nu :R_{+}^{1}\longrightarrow R^{1}$ such that the
inequations 
\begin{equation*}
\left\Vert F_{0}\left( x\right) -F_{0}\left( x_{0}\right) \right\Vert
_{Y}\leq \mu _{1}\left( \left\Vert x-x_{0}\right\Vert _{X}\right) \ \&\
\left\Vert F_{1}\left( x\right) -F_{1}\left( x_{0}\right) \right\Vert
_{Y}\leq \mu _{2}\left( \left\Vert x-x_{0}\right\Vert _{X}\right) ,
\end{equation*}%
\begin{equation*}
\left\langle F\left( x\right) -F\left( x_{0}\right) ,g\left( x\right)
\right\rangle \geq c\left\langle F_{0}\left( x\right) -F_{0}\left(
x_{0}\right) ,g\left( x\right) \right\rangle \geq \nu \left( \left\Vert
x-x_{0}\right\Vert _{X}\right) \left\Vert x-x_{0}\right\Vert _{X}
\end{equation*}%
hold for any $x\in B_{r}^{X}\left( x_{0}\right) $, moreover $\nu \left(
r\right) \geq $ $\delta _{0}$ holds for some number $\delta _{0}>0$, where
the mapping $g:B_{r}^{X}\left( x_{0}\right) \subseteq D\left( g\right)
\subseteq X\longrightarrow Y^{\ast }$ fulfills the conditions of Theorem 2, $%
c>0$ is some number.

4) Almost each $\widetilde{x}\in intB_{r}^{X}\left( x_{0}\right) $ possesses
a neighborhood $B_{\varepsilon }^{X}\left( \widetilde{x}\right) $, $%
\varepsilon \geq \varepsilon _{0}>0$, such that the inequation 
\begin{equation*}
\left\Vert F\left( x_{1}\right) -F\left( x_{2}\right) \right\Vert _{Y}\geq
c_{1}\left\Vert F_{0}\left( x_{1}\right) -F_{0}\left( x_{2}\right)
\right\Vert _{Y}\geq
\end{equation*}%
\begin{equation*}
k_{0}\left( \left\Vert x_{1}-x_{2}\right\Vert _{X},\widetilde{x},\varepsilon
\right) -k_{1}\left( \left\Vert x_{1}-x_{2}\right\Vert _{Z},\widetilde{x}%
,\varepsilon \right) ,\quad X\Subset Z
\end{equation*}%
holds for any $x_{1},x_{2}\in B_{\varepsilon }^{X}\left( \widetilde{x}%
\right) $ and some number $\varepsilon _{0}>0$, where $k_{i}\left( \tau ,%
\widetilde{x},\varepsilon \right) \geq 0,$ $i=0,1$ are continuous functions
of $\tau $ for any given $\widetilde{x}$, and such that $k_{0}\left( \tau ,%
\widetilde{x},\varepsilon \right) =0\Longleftrightarrow \tau =0$, $%
k_{1}\left( 0,\widetilde{x},\varepsilon \right) =0$,\ and $X\Subset Z$ (i.e. 
$X\subset Z$ is compact).

Then the following statement is true, which follows from Theorem 2.

\begin{theorem}
Let the conditions 1, 2, 3 be fulfilled. Then if $F\left( B_{r}^{X}\left(
x_{0}\right) \right) $ is closed (or is fulfilled the condition 4 or (d')),
then the equation (3.3) has a solution in the ball $B_{r}^{X}\left(
x_{0}\right) $ for any $y\in Y$ satisfying the inequation 
\begin{equation*}
\left\langle y-F\left( x_{0}\right) ,g\left( x\right) \right\rangle \leq \nu
\left( \left\Vert x-x_{0}\right\Vert _{X}\right) \left\Vert
x-x_{0}\right\Vert _{X},\quad \forall x\in S_{r}^{X}\left( x_{0}\right) .
\end{equation*}
\end{theorem}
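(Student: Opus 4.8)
The plan is to deduce Theorem 3 directly from Theorem 2 by taking $f\equiv F=F_{0}+F_{1}$ and $r_{0}=r$, so that the whole argument reduces to checking that conditions 1--3 (together with whichever of the three alternatives, closedness, condition 4, or (d'), is assumed) imply hypotheses (a), (b), (c) (and the matching closedness/(d)/(d') alternative) of Theorem 2, and then to recognising the admissible right-hand sides $y$ of Theorem 3 as elements of the bodily subset $M$ furnished by Theorem 2.

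First I would verify the structural hypotheses. Condition (a) is immediate: by conditions 1 and 2 both $F_{0}$ and $F_{1}$ are continuous on $B_{r}^{X}(x_{0})\subseteq D(F_{0})\cap D(F_{1})$, hence $F=F_{0}+F_{1}$ is continuous there, and the closed ball lies in $D(F)$. For (b), the two bounds in condition 3 give $\|F(x)\|_{Y}\le \|F(x_{0})\|_{Y}+\mu_{1}(\|x-x_{0}\|_{X})+\mu_{2}(\|x-x_{0}\|_{X})$; since $\|x-x_{0}\|_{X}\le \|x\|_{X}+\|x_{0}\|_{X}$ on the ball, the right-hand side is a continuous function of $\|x\|_{X}$, which is the $\mu$ demanded in (b). Condition (c) is read off from the coercivity chain of condition 3, whose outer inequality is exactly $\langle F(x)-F(x_{0}),g(x)\rangle\ge \nu(\|x-x_{0}\|_{X})\|x-x_{0}\|_{X}$, with $\nu(r)\ge\delta_{0}>0$; the intermediate term carrying the factor $c$ and $F_{0}$ is merely the device by which this is checked in applications (via invertibility of $F_{0}$) and plays no role in the abstract reduction. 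The closedness alternative is verbatim the same in both theorems, (d') is identical, and condition 4 produces (d) upon setting $\Phi(\tau,\widetilde{x},\varepsilon):=k_{0}(\tau,\widetilde{x},\varepsilon)$ and $\psi(\tau,\widetilde{x},\varepsilon):=-k_{1}(\tau,\widetilde{x},\varepsilon)$, which is legitimate because $\psi$ is permitted to be negative with $\psi(0,\cdot,\cdot)=0$, while $\Phi\ge0$ satisfies $\Phi=0\Leftrightarrow\tau=0$ and $X\Subset Z$.

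With every hypothesis of Theorem 2 in force for $f=F$ and $r_{0}=r$, Theorem 2 yields that $F(B_{r}^{X}(x_{0}))$ contains the bodily set $M=\{y\in Y:\langle y,g(x)\rangle\le\langle F(x),g(x)\rangle,\ \forall x\in S_{r}^{X}(x_{0})\}$. It then remains to show that any $y$ meeting the displayed admissibility inequation of Theorem 3 lies in $M$. I would argue this pointwise on the sphere: for $x\in S_{r}^{X}(x_{0})$ one has $\|x-x_{0}\|_{X}=r$, so the hypothesis on $y$ reads $\langle y-F(x_{0}),g(x)\rangle\le\nu(r)\,r$, whereas condition 3 (which holds for every $x\in B_{r}^{X}(x_{0})$, in particular on the sphere) gives $\langle F(x)-F(x_{0}),g(x)\rangle\ge\nu(r)\,r$. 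Chaining these two estimates and cancelling the common term $\langle F(x_{0}),g(x)\rangle$ by linearity of the pairing yields $\langle y,g(x)\rangle\le\langle F(x),g(x)\rangle$ for all $x\in S_{r}^{X}(x_{0})$, i.e. $y\in M\subseteq F(B_{r}^{X}(x_{0}))$. Consequently there exists $x\in B_{r}^{X}(x_{0})$ with $F(x)=y$, which is the claimed solvability.

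The only genuinely delicate point is this last identification: one must recognise that the admissibility condition of Theorem 3, phrased through $y-F(x_{0})$ and the modulus $\nu$, coincides with the defining description of the set $M$ delivered by Theorem 2, and that the bridge between them is precisely the coercivity inequality evaluated on the sphere $S_{r}^{X}(x_{0})$, where $\nu(\|x-x_{0}\|_{X})\|x-x_{0}\|_{X}$ collapses to the constant $\nu(r)\,r$. Everything else is a routine transcription of conditions 1--4 into (a)--(d), so I anticipate no obstacle beyond keeping the two-sided estimate at radius $r$ correctly aligned.
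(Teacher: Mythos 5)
Your proposal is correct and takes essentially the same route the paper intends: the paper gives no written argument beyond asserting that Theorem 3 \emph{``follows from Theorem 2,''} and your proof is precisely that reduction carried out explicitly --- transcribing conditions 1--3 (and 4, resp. (d')) into hypotheses (a)--(d) of Theorem 2 with $f=F$, $r_{0}=r$, and then showing via the coercivity inequality on the sphere $S_{r}^{X}(x_{0})$ that every admissible $y$ lies in the bodily set $M$, hence in $F\left( B_{r}^{X}\left( x_{0}\right) \right)$. No gaps.
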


\section{Proof of Existence Theorem of Problem (0.1)-(0.2)}

It should be noted that here we continue the investigation of the problem
studied in the article [31] where this problem is studied in the case $%
Q\equiv Q_{T}\equiv \left( 0,T\right) \times \Omega $, when $T<\infty $ is
some number. Here we will study the global existence of this problem, i.e.
in the case when $Q\equiv R_{+}\times \Omega $. So in the beginning we set a
space and explain the way of the investigation. As the solutions $u\left(
t,x\right) $ of the considered problem will seek in the form $u\left(
t,x\right) \equiv u_{1}\left( t,x\right) +iu_{2}\left( t,x\right) $, where $%
u_{j}:Q\longrightarrow R$, $j=1,2$ we can set this function $u\left(
t,x\right) $ as the vector function, i.e. $\overrightarrow{u\left(
t,x\right) }\equiv \left( u_{1}\left( t,x\right) ;u_{2}\left( t,x\right)
\right) $ and $\overrightarrow{u}:Q\longrightarrow R^{2}$. Consequently if
we write $u\in X$ (for example $X\equiv L^{m}\left( R_{+};W_{0}^{1,2}\left(
\Omega \right) \right) \cap W^{1,2}\left( R_{+};L^{2}\left( \Omega \right)
\right) $, $m\geq \max \left\{ p,\widetilde{p}\right\} $) then this we
understand as $u_{j}\in X$, $j=1,2$ or $\overrightarrow{u}\in X\times X$.
Now we can define a solution of the problem (0.1)-(0.2) more exactly.

\begin{definition}
We say that the function $u\in L^{m}\left( R_{+};W_{0}^{1,2}\left( \Omega
\right) \right) \cap W^{1,2}\left( R_{+};L^{2}\left( \Omega \right) \right) $
$\cap \left\{ w\left( t,x\right) \left\vert \ w\left( 0,x\right)
=u_{0}\left( x\right) \right. \right\} \equiv X$ (as complex function) is a
solution of the problem (0.1)-(0.2) if it satisfies the equation 
\begin{equation*}
i\left\langle \frac{\partial u}{\partial t},\overline{v}\right\rangle
+\left\langle \nabla u,\nabla \overline{v}\right\rangle +\left\langle
q\left( x\right) \left\vert u\right\vert ^{p-2}u,\overline{v}\right\rangle
+\left\langle a\left( x\right) \left\vert u\right\vert ^{\widetilde{p}-2}u,%
\overline{v}\right\rangle =\left\langle h,\overline{v}\right\rangle
\end{equation*}%
for any $v\in L^{\left( 2,m\right) }\left( R_{+};W_{0}^{1,2}\left( \Omega
\right) \right) $ and a.e. $t>0$.
\end{definition}

Let us $f:X\longrightarrow Y$ is the operator generated by the problem
(0.1)-(0.2), where $X$ is the denoted above space, and 
\begin{equation*}
Y\equiv L^{2}\left( R_{+};W^{-1,2}\left( \Omega \right) \right)
+L^{m^{\prime }}\left( R_{+};W^{-1,2}\left( \Omega \right) \right) +L^{%
\widetilde{q}}\left( Q\right)
\end{equation*}%
where $\widetilde{q}=\frac{\widetilde{p}}{\widetilde{p}-1}$. We will show
that for this operator are fulfilled all conditions of the main theorem. We
make this by sequence of steps. Clearly that the conditions (a) and (b)
fulfilled. Indeed, the explanations conducted in the previous sections shows
that $f:X\longrightarrow Y$ is the continuous bounded operator. It should be
noted that the calculation of the function $\mu $ not is difficult,
therefore we not will conduct this computation here (see, below Proposition
4).

\begin{proposition}
Let all conditions of Theorem 1 are fulfilled, then the operator $f$
satisfies the condition (c) with the operator $\frac{\partial }{\partial t}%
+I $ on the space $W^{1,2}\left( R_{+};W_{0}^{1,2}\left( \Omega \right)
\right) \cap $ $L^{m}\left( R_{+};W_{0}^{1,2}\left( \Omega \right) \right)
\cap $ $\left\{ v\left( t,x\right) \left\vert \ v\left( 0,x\right)
=u_{0}\left( x\right) \right. \right\} $. Moreover takes place the following
inequations 
\begin{equation*}
\overset{t}{\underset{0}{\dint }}\func{Im}\left\langle f\left( u\right) ,%
\frac{\partial \overline{u}}{\partial s}+\overline{u}\right\rangle ds\equiv 
\frac{1}{2}\left\Vert u\left( t\right) \right\Vert _{2}^{2}-\frac{1}{2}%
\left\Vert u_{0}\right\Vert _{2}^{2}+\overset{t}{\underset{0}{\dint }}%
\left\Vert \frac{\partial u}{\partial s}\right\Vert _{L_{2}\left( \Omega
\right) }^{2}ds;
\end{equation*}%
\begin{equation*}
\overset{t}{\underset{0}{\dint }}\func{Re}\left\langle f\left( u\right) ,%
\frac{\partial \overline{u}}{\partial s}+\overline{u}\right\rangle ds\geq
\eta \left\Vert \nabla u\left( t\right) \right\Vert _{2}^{2}+\eta \overset{t}%
{\underset{0}{\dint }}\left\Vert \nabla u\left( s\right) \right\Vert
_{2}^{2}ds+
\end{equation*}%
\begin{equation*}
\delta _{1}\overset{t}{\underset{0}{\dint }}\left\langle q\left( x\right)
\left\vert u\right\vert ^{p-2}u,\overline{u}\right\rangle \left( s\right)
ds+\delta _{2}\left\langle q\left( x\right) \left\vert u\right\vert ^{p-2}u,%
\overline{u}\right\rangle \left( t\right) -
\end{equation*}%
\begin{equation*}
C\left( \left\Vert \nabla u_{0}\right\Vert ,\left\Vert q\right\Vert
_{W^{-1,2}},\left\Vert u_{0}\right\Vert _{2^{\ast }},\left\Vert a\right\Vert
_{m},p,\widetilde{p}\right) ,
\end{equation*}%
the constants of these inequations are determined in (4.6)
\end{proposition}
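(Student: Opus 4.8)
The plan is to test the operator equation $f(u)=iu_{t}-\Delta u+q|u|^{p-2}u+a|u|^{\widetilde{p}-2}u$ with the multiplier $g(u)=\frac{\partial u}{\partial t}+u$ (the operator $\frac{\partial}{\partial t}+I$ of the statement), integrate over $(0,t)\times\Omega$, and split the resulting scalar pairing into its imaginary and real parts, which carry respectively the $L^{2}$ (mass) balance and the $W_{0}^{1,2}$ (energy) balance characteristic of the Schr\"odinger structure. Before any computation I would record that every pairing below is well defined in the chosen spaces: by Proposition 1 the term $q|u|^{p-2}u$ lies in $L^{2}(R_{+};W^{-1,2}(\Omega))$, and by Corollary 1 the products $|u|^{p-2}uw$ belong to $W_{0}^{1,\beta}(\Omega)$, so the action of $q\in W^{-1,p_{0}}(\Omega)$ against them is meaningful through the representation $q=\sum_{k=0}^{n}D_{k}q_{k}$ used in Section 2. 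I would first verify the identities on smooth approximants $u^{N}$ dense in the solution space (regularising $q$ if necessary) and pass to the limit at the very end; I would also note that $g=\frac{\partial}{\partial t}+I$ maps the solution space into the dual $Y^{\ast}$ required by condition (c).

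For the imaginary part I would exploit the skew-adjointness of the $i\partial_{t}$ term. Testing against $\bar u$ renders the Laplacian and both nonlinear contributions real, namely $\|\nabla u\|_{2}^{2}$, $\langle q,|u|^{p}\rangle$ and $\int_{\Omega}a|u|^{\widetilde{p}}$, so the surviving imaginary contribution is $\mathrm{Im}\,\langle iu_{t},\bar u\rangle=\mathrm{Re}\int u_{t}\bar u=\frac{1}{2}\frac{d}{dt}\|u\|_{2}^{2}$; testing against $\bar u_{t}$ supplies $\mathrm{Im}\,\langle iu_{t},\bar u_{t}\rangle=\|u_{t}\|_{2}^{2}$. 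Integrating in $t$ and inserting $u(0)=u_{0}$ then reorganises the pairing $\mathrm{Im}\,[f(u),\bar u_{s}+\bar u]$ into the first (mass-type) identity $\frac{1}{2}\|u(t)\|_{2}^{2}-\frac{1}{2}\|u_{0}\|_{2}^{2}+\int_{0}^{t}\|u_{s}\|_{2}^{2}\,ds$ of the Proposition.

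For the real part I would use that $\mathrm{Re}\,\langle iu_{t},\bar u_{t}\rangle=0$, which eliminates $u_{t}$ from the energy balance. Testing against $\bar u_{t}$, integration by parts in $x$ converts $\langle-\Delta u,\bar u_{t}\rangle$, $\langle q|u|^{p-2}u,\bar u_{t}\rangle$ and $\langle a|u|^{\widetilde{p}-2}u,\bar u_{t}\rangle$ into $\frac{1}{2}\frac{d}{dt}\|\nabla u\|_{2}^{2}$, $\frac{1}{p}\frac{d}{dt}\langle q,|u|^{p}\rangle$ and $\frac{1}{\widetilde{p}}\frac{d}{dt}\int_{\Omega}a|u|^{\widetilde{p}}$, using the pointwise rule $\frac{d}{dt}|u|^{p}=p|u|^{p-2}\mathrm{Re}(\bar u\,u_{t})$; testing against $\bar u$ yields the undifferentiated copies $\|\nabla u\|_{2}^{2}$, $\langle q,|u|^{p}\rangle$ and $\int_{\Omega}a|u|^{\widetilde{p}}$. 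I would then invoke $q\geq0$ in the sense of Definition 2 together with $|u|^{p}\in W_{0}^{1,\beta}(\Omega)$, $|u|^{p}\geq0$ (Lemma 1) to keep the $q$-terms as nonnegative, and bound the sign-indefinite $a$-terms from below by the coercivity hypothesis (1.2), $\langle a|u|^{\widetilde{p}-2}u,\bar u\rangle\geq-k_{0}\|u\|_{p_{2}}^{2}-k_{1}\langle q|u|^{p-2}u,\bar u\rangle$. Integrating in $t$ and collecting the data-dependent endpoint contributions from the fundamental theorem of calculus produces the constant $C(\|\nabla u_{0}\|,\|q\|_{W^{-1,2}},\|u_{0}\|_{2^{\ast}},\|a\|_{m},p,\widetilde{p})$ of (4.6), leaving the stated lower bound with the good terms $\eta\|\nabla u(t)\|_{2}^{2}+\eta\int_{0}^{t}\|\nabla u\|_{2}^{2}\,ds$ and the two $q$-terms.

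Finally I would assemble the two balances into the coercivity estimate (3.1) of condition (c): the harmful $k_{0}\|u\|_{p_{2}}^{2}$ is absorbed into $\|\nabla u\|_{2}^{2}$ through the embedding inequality $\|\nabla u\|_{2}\geq C(2,p_{2})\|u\|_{p_{2}}$ exactly because $C(2,p_{2})^{2}k_{0}<1$, so a strictly positive multiple of a monotone function of the gradient norm remains, which furnishes $\nu$ with $\nu(r_{0})\geq\delta_{0}>0$. I expect the main obstacle to be the distributional time-differentiation step, that is, justifying $\langle q|u|^{p-2}u,\bar u_{t}\rangle=\frac{1}{p}\frac{d}{dt}\langle q,|u|^{p}\rangle$ when $q$ is only a first-order distribution: this forces one to work through the representation $q=\sum_{k}D_{k}q_{k}$, to control $\partial_{t}(|u|^{p-2}u)$ and $\partial_{x_{k}}(|u|^{p-2}u)$ simultaneously in $L^{\beta}$ via Corollary 1, and to legitimise the regularise-and-pass-to-the-limit argument so that no endpoint-in-time terms are lost and the nonnegativity $\langle q,|u|^{p}\rangle\geq0$ survives the limit.
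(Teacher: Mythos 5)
Your proposal follows essentially the same route as the paper's proof: pair $f(u)$ with $\frac{\partial \overline{u}}{\partial t}+\overline{u}$, split into imaginary and real parts to obtain the mass identity (4.5) and the energy lower bound, integrate in time, and then apply condition (1.2) together with the embedding inequality $\left\Vert \nabla u\right\Vert _{2}\geq C\left( 2,p_{2}\right) \left\Vert u\right\Vert _{p_{2}}$ to absorb the $k_{0}\left\Vert u\right\Vert _{p_{2}}^{2}$ terms, arriving at exactly the constants $\eta =1-C\left( 2,p_{2}\right) ^{2}k_{0}$, $\delta _{1}=1-k_{1}$, $\delta _{2}=p^{-1}-\widetilde{p}^{-1}k_{1}$ of (4.6). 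The only difference is that you explicitly flag and propose to justify (by regularization and density) the distributional chain rule $\left\langle q\left\vert u\right\vert ^{p-2}u,\overline{u}_{t}\right\rangle =\frac{1}{p}\frac{d}{dt}\left\langle q,\left\vert u\right\vert ^{p}\right\rangle $, a step the paper uses without comment.
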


\begin{proof}
Consider the expression $\left\langle f\left( u\right) ,\frac{\partial 
\overline{u}}{\partial t}+\overline{u}\right\rangle $, which we can write as 
\begin{equation*}
\left\langle f\left( u\right) ,\frac{d\overline{u}}{dt}\right\rangle
=i\left\langle \frac{\partial u}{\partial t},\frac{\partial \overline{u}}{%
\partial t}\right\rangle +\frac{1}{2}\frac{d}{dt}\left\langle \nabla
u,\nabla \overline{u}\right\rangle +
\end{equation*}%
\begin{equation}
\left\langle q\left( x\right) \left\vert u\right\vert ^{p-2}u,\frac{\partial 
\overline{u}}{\partial t}\right\rangle +\left\langle a\left( x\right)
\left\vert u\right\vert ^{\widetilde{p}-2}u,\frac{d\overline{u}}{dt}%
\right\rangle ,  \tag{4.1}
\end{equation}%
\begin{equation}
\left\langle f\left( u\right) ,\overline{u}\right\rangle =i\left\langle 
\frac{\partial u}{\partial t},\overline{u}\right\rangle +\left\langle \nabla
u,\nabla \overline{u}\right\rangle +\left\langle q\left( x\right) \left\vert
u\right\vert ^{p-2}u,\overline{u}\right\rangle +\left\langle a\left(
x\right) \left\vert u\right\vert ^{\widetilde{p}-2}u,\overline{u}%
\right\rangle ,  \tag{4.2}
\end{equation}%
for any $u\in W^{1,2}\left( R_{+};W_{0}^{1,2}\left( \Omega \right) \right) $%
. We begin by (4.2), then we get 
\begin{equation*}
\left\langle f\left( u\right) ,\overline{u}\right\rangle =\frac{i}{2}\frac{d%
}{dt}\left\Vert u\left( t\right) \right\Vert _{2}^{2}+\left\Vert \nabla
u\right\Vert _{2}^{2}+
\end{equation*}%
\begin{equation*}
\left\langle q\left( x\right) \left\vert u\right\vert ^{p-2}u,\overline{u}%
\right\rangle +\left\langle a\left( x\right) \left\vert u\right\vert ^{%
\widetilde{p}-2}u,\overline{u}\right\rangle .
\end{equation*}%
Consequently, we have 
\begin{equation*}
\func{Im}\left\langle f\left( u\right) ,\overline{u}\right\rangle =\frac{1}{2%
}\frac{d}{dt}\left\Vert u\left( t\right) \right\Vert _{2}^{2}
\end{equation*}%
and 
\begin{equation*}
\func{Re}\left\langle f\left( u\right) ,\overline{u}\right\rangle
=\left\Vert \nabla u\right\Vert _{2}^{2}+\left\langle q\left( x\right)
\left\vert u\right\vert ^{p-2}u,\overline{u}\right\rangle +\left\langle
a\left( x\right) \left\vert u\right\vert ^{\widetilde{p}-2}u,\overline{u}%
\right\rangle ,
\end{equation*}%
as $\func{Im}q\left( x\right) =0$ and $\func{Im}a\left( x\right) =0$. Whence
follows that 
\begin{equation*}
\overset{t}{\underset{0}{\dint }}\func{Im}\left\langle f\left( u\right) ,%
\overline{u}\right\rangle ds=\frac{1}{2}\left\Vert u\left( t\right)
\right\Vert _{2}^{2}-\frac{1}{2}\left\Vert u_{0}\right\Vert _{2}^{2}
\end{equation*}%
and 
\begin{equation*}
\func{Re}\left\langle f\left( u\right) ,\overline{u}\right\rangle \geq
\left\Vert \nabla u\right\Vert _{2}^{2}+\left\langle q\left( x\right)
\left\vert u\right\vert ^{p-2}u,\overline{u}\right\rangle +\left\langle
a\left( x\right) \left\vert u\right\vert ^{\widetilde{p}-2}u,\overline{u}%
\right\rangle \Longrightarrow
\end{equation*}%
\begin{equation*}
\func{Re}\left\langle f\left( u\right) ,\overline{u}\right\rangle \geq
\left\Vert \nabla u\right\Vert _{2}^{2}+\left( 1-k_{1}\right) \left\langle
q\left( x\right) ,\left\vert u\right\vert ^{p}\right\rangle -k_{0}\left\Vert
u\right\Vert _{p_{2}}^{2}\Longrightarrow
\end{equation*}%
\begin{equation*}
\func{Re}\left\langle f\left( u\right) ,\overline{u}\right\rangle \geq
\left( 1-k_{0}C^{2}\left( p_{2}\right) \right) \left\Vert \nabla u\left(
t\right) \right\Vert _{2}^{2},
\end{equation*}%
for a.e. $t>0$, as $k_{1}\leq 1$ and $k_{0}C^{2}\left( p_{2}\right) <1$ by
virtue of the condition (\textit{ii}).

Thereby if we examine now (4.1) and (4.2) together then we will get. 
\begin{equation*}
\left\langle f\left( u\right) ,\frac{\partial \overline{u}}{\partial t}+%
\overline{u}\right\rangle \equiv \frac{i}{2}\frac{d}{dt}\left\Vert u\left(
t\right) \right\Vert _{2}^{2}+i\left\Vert \frac{\partial u}{\partial t}%
\right\Vert _{L_{2}\left( \Omega \right) }^{2}+
\end{equation*}%
\begin{equation*}
\left\Vert \nabla u\right\Vert _{2}^{2}+\left\langle q\left( x\right)
\left\vert u\right\vert ^{p-2}u,\overline{u}\right\rangle +\left\langle
a\left( x\right) \left\vert u\right\vert ^{\widetilde{p}-2}u,\overline{u}%
\right\rangle +
\end{equation*}%
\begin{equation*}
\frac{1}{2}\frac{\partial }{\partial t}\left\Vert \nabla u\left( t\right)
\right\Vert _{L_{2}\left( \Omega \right) }^{2}+\frac{1}{p}\frac{\partial }{%
\partial t}\left\langle q\left( x\right) ,\left\vert u\right\vert
^{p}\right\rangle +\frac{1}{\widetilde{p}}\frac{\partial }{\partial t}%
\left\langle a,\left\vert u\right\vert ^{\widetilde{p}}\right\rangle ,
\end{equation*}%
in other words we have 
\begin{equation}
\func{Im}\left\langle f\left( u\right) ,\frac{\partial \overline{u}}{%
\partial t}+\overline{u}\right\rangle \equiv \frac{1}{2}\frac{d}{dt}%
\left\Vert u\left( t\right) \right\Vert _{2}^{2}+\left\Vert \frac{\partial u%
}{\partial t}\right\Vert _{L_{2}\left( \Omega \right) }^{2}  \tag{4.3}
\end{equation}%
and 
\begin{equation*}
\func{Re}\left\langle f\left( u\right) ,\frac{\partial \overline{u}}{%
\partial t}+\overline{u}\right\rangle \equiv \frac{1}{2}\frac{\partial }{%
\partial t}\left\Vert \nabla u\left( t\right) \right\Vert _{L_{2}\left(
\Omega \right) }^{2}+\left\Vert \nabla u\left( t\right) \right\Vert _{2}^{2}+
\end{equation*}%
\begin{equation*}
\frac{1}{p}\frac{\partial }{\partial t}\left\langle q\left( x\right)
,\left\vert u\right\vert ^{p}\right\rangle +\left\langle q\left( x\right)
\left\vert u\right\vert ^{p-2}u,\overline{u}\right\rangle +
\end{equation*}%
\begin{equation}
\frac{1}{\widetilde{p}}\frac{\partial }{\partial t}\left\langle a,\left\vert
u\right\vert ^{\widetilde{p}}\right\rangle +\left\langle a\left( x\right)
\left\vert u\right\vert ^{\widetilde{p}-2}u,\overline{u}\right\rangle . 
\tag{4.4}
\end{equation}

If we integrate with respect to $t$ these equation then we have 
\begin{equation*}
\overset{t}{\underset{0}{\dint }}\func{Im}\left\langle f\left( u\right) ,%
\frac{\partial \overline{u}}{\partial s}+\overline{u}\right\rangle ds\equiv 
\overset{t}{\underset{0}{\dint }}\left[ \frac{1}{2}\frac{d}{ds}\left\Vert
u\left( s\right) \right\Vert _{2}^{2}+\left\Vert \frac{\partial u}{\partial s%
}\right\Vert _{L_{2}\left( \Omega \right) }^{2}\right] ds
\end{equation*}%
and 
\begin{equation*}
\overset{t}{\underset{0}{\dint }}\func{Re}\left\langle f\left( u\right) ,%
\frac{\partial \overline{u}}{\partial s}+\overline{u}\right\rangle ds\equiv 
\overset{t}{\underset{0}{\dint }}\left[ \frac{1}{2}\frac{\partial }{\partial
s}\left\Vert \nabla u\left( s\right) \right\Vert _{2}^{2}+\left\Vert \nabla
u\left( s\right) \right\Vert _{2}^{2}\right] ds+
\end{equation*}%
\begin{equation*}
\overset{t}{\underset{0}{\dint }}\left[ \frac{1}{p}\frac{\partial }{\partial
s}\left\langle q\left( x\right) ,\left\vert u\right\vert ^{p}\right\rangle
+\left\langle q\left( x\right) \left\vert u\right\vert ^{p-2}u,\overline{u}%
\right\rangle \right] ds+
\end{equation*}%
\begin{equation*}
\overset{t}{\underset{0}{\dint }}\left[ \frac{1}{\widetilde{p}}\frac{%
\partial }{\partial s}\left\langle a,\left\vert u\right\vert ^{\widetilde{p}%
}\right\rangle +\left\langle a\left( x\right) \left\vert u\right\vert ^{%
\widetilde{p}-2}u,\overline{u}\right\rangle \right] ds.
\end{equation*}

Thence follow 
\begin{equation}
\overset{t}{\underset{0}{\dint }}\func{Im}\left\langle f\left( u\right) ,%
\frac{\partial \overline{u}}{\partial s}+\overline{u}\right\rangle ds\equiv 
\frac{1}{2}\left\Vert u\left( t\right) \right\Vert _{2}^{2}-\frac{1}{2}%
\left\Vert u_{0}\right\Vert _{2}^{2}+\overset{t}{\underset{0}{\dint }}%
\left\Vert \frac{\partial u}{\partial s}\right\Vert _{L_{2}\left( \Omega
\right) }^{2}ds;  \tag{4.5}
\end{equation}%
\begin{equation*}
\overset{t}{\underset{0}{\dint }}\func{Re}\left\langle f\left( u\right) ,%
\frac{\partial \overline{u}}{\partial s}+\overline{u}\right\rangle ds\equiv 
\frac{1}{2}\left\Vert \nabla u\left( t\right) \right\Vert _{2}^{2}-\frac{1}{2%
}\left\Vert \nabla u_{0}\right\Vert _{2}^{2}+
\end{equation*}%
\begin{equation*}
\overset{t}{\underset{0}{\dint }}\left\Vert \nabla u\left( s\right)
\right\Vert _{2}^{2}ds+\overset{t}{\underset{0}{\dint }}\left[ \left\langle
q\left( x\right) \left\vert u\right\vert ^{p-2}u,\overline{u}\right\rangle
+\left\langle a\left( x\right) \left\vert u\right\vert ^{\widetilde{p}-2}u,%
\overline{u}\right\rangle \right] ds+
\end{equation*}%
\begin{equation*}
\left[ \frac{1}{p}\left\langle q\left( x\right) ,\left\vert u\right\vert
^{p}\right\rangle +\frac{1}{\widetilde{p}}\left\langle a\left( x\right)
,\left\vert u\right\vert ^{\widetilde{p}}\right\rangle \right] \left(
t\right) -\frac{1}{p}\left\langle q\left( x\right) ,\left\vert
u_{0}\right\vert ^{p}\right\rangle -\frac{1}{\widetilde{p}}\left\langle
a,\left\vert u_{0}\right\vert ^{\widetilde{p}}\right\rangle .
\end{equation*}

For estimate the $\overset{t}{\underset{0}{\dint }}\func{Re}\left\langle
f\left( u\right) ,\frac{\partial \overline{u}}{\partial s}+\overline{u}%
\right\rangle ds$ we use (1.2) (i.e. the condition (\textit{ii})) 
\begin{equation*}
\left\langle a\left( x\right) \left\vert u\right\vert ^{\widetilde{p}-2}u,%
\overline{u}\right\rangle \geq -k_{0}\left\Vert u\right\Vert
_{p_{2}}^{2}-k_{1}\left\langle q\left( x\right) ,\left\vert u\right\vert
^{p}\right\rangle
\end{equation*}%
then we obtain 
\begin{equation*}
\overset{t}{\underset{0}{\dint }}\func{Re}\left\langle f\left( u\right) ,%
\frac{\partial \overline{u}}{\partial s}+\overline{u}\right\rangle ds\geq 
\frac{1}{2}\left\Vert \nabla u\left( t\right) \right\Vert _{2}^{2}+\overset{t%
}{\underset{0}{\dint }}\left\Vert \nabla u\left( s\right) \right\Vert
_{2}^{2}ds+
\end{equation*}%
\begin{equation*}
\delta _{1}\overset{t}{\underset{0}{\dint }}\left\langle q\left( x\right)
\left\vert u\right\vert ^{p-2}u,\overline{u}\right\rangle \left( s\right)
ds-k_{0}\overset{t}{\underset{0}{\dint }}\left\Vert u\right\Vert
_{p_{2}}^{2}\left( s\right) ds+\delta _{2}\left\langle q\left( x\right)
\left\vert u\right\vert ^{p-2}u,\overline{u}\right\rangle \left( t\right) -
\end{equation*}%
\begin{equation*}
-k_{0}\left\Vert u\left( t\right) \right\Vert _{p_{2}}^{2}-\frac{1}{2}%
\left\Vert \nabla u_{0}\right\Vert _{2}^{2}-\frac{1}{p}\left\langle q\left(
x\right) ,\left\vert u_{0}\right\vert ^{p}\right\rangle -\frac{1}{\widetilde{%
p}}\left\langle a,\left\vert u_{0}\right\vert ^{\widetilde{p}}\right\rangle
\geq
\end{equation*}%
\begin{equation*}
\eta \left\Vert \nabla u\left( t\right) \right\Vert _{2}^{2}+\eta \overset{t}%
{\underset{0}{\dint }}\left\Vert \nabla u\left( s\right) \right\Vert
_{2}^{2}ds+\delta _{1}\overset{t}{\underset{0}{\dint }}\left\langle q\left(
x\right) \left\vert u\right\vert ^{p-2}u,\overline{u}\right\rangle \left(
s\right) ds+
\end{equation*}%
\begin{equation}
\delta _{2}\left\langle q\left( x\right) \left\vert u\right\vert ^{p-2}u,%
\overline{u}\right\rangle \left( t\right) -C\left( \left\Vert \nabla
u_{0}\right\Vert ,\left\Vert q\right\Vert _{W^{-1,2}},\left\Vert
u_{0}\right\Vert _{2^{\ast }},\left\Vert a\right\Vert _{m},p,\widetilde{p}%
\right)  \tag{4.6}
\end{equation}%
for a.e. $t>0$, where $\delta _{1}=1-k_{1}\geq 0$, $\delta _{2}=p^{-1}-%
\widetilde{p}^{-1}k_{1}\geq 0$, $\eta =1-C\left( 2,p_{2}\right) ^{2}k_{0}>0$%
. The expressions (4.5) and (4.6) shows that the condition (c) of the main
theorem takes place for the operator $f:X\longrightarrow Y$ generated by
posed problem.
\end{proof}

2. Now we prove an inequation used in the proof of the fulfilment of the
condition (d').

\begin{proposition}
Let all conditions of Theorem 1 are fulfilled, then the following inequality 
\begin{equation*}
\left\Vert f\left( u\right) -f\left( v\right) \right\Vert _{Y}\geq
\left\Vert u-v\right\Vert _{2}\left( t\right) +\left\Vert \nabla \left(
u-v\right) \right\Vert _{2}-
\end{equation*}%
\begin{equation*}
M\max \left\{ \left\Vert u\right\Vert _{\widetilde{p}}^{\widetilde{p}%
-2};\left\Vert v\right\Vert _{\widetilde{p}}^{\widetilde{p}-2}\right\}
\left\Vert u-v\right\Vert _{\widetilde{p}},
\end{equation*}%
holds for any $u,v\in X\cap \left\{ u\left\vert \ u\left( 0,x\right)
=u_{0}\left( x\right) \right. \right\} $.
\end{proposition}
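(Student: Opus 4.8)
The plan is to bound the $Y$-norm from below by duality and then to replay, for the difference, the energy computation of Proposition 3. Writing $w\equiv u-v$ we have, by the hypothesis $u,v\in X\cap\{w(0,x)=u_{0}\}$, the crucial fact $w(0,x)=0$, and
$$
f(u)-f(v)=i\frac{\partial w}{\partial t}-\Delta w+q\big(|u|^{p-2}u-|v|^{p-2}v\big)+a\big(|u|^{\widetilde p-2}u-|v|^{\widetilde p-2}v\big).
$$
Since $Y$ is a sum space, $\|g\|_{Y}\geq |\langle g,\varphi\rangle|/\|\varphi\|_{Y^{\ast}}$ for every $\varphi\in Y^{\ast}=L^{2}(R_{+};W_{0}^{1,2})\cap L^{m}(R_{+};W_{0}^{1,2})\cap L^{\widetilde p}(Q)$, so I would test $f(u)-f(v)$ against the multiplier $\varphi=\frac{\partial\overline w}{\partial t}+\overline w$, exactly the one used in Proposition 3.

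First I would run this pairing through the same real/imaginary splitting and time-integration that produced (4.3)--(4.6). The imaginary channel, after integrating in $t$ and using $w(0)=0$, yields $\tfrac12\|w(t)\|_{2}^{2}+\int_{0}^{t}\|\partial_{s}w\|_{2}^{2}\,ds$; the real channel yields $\tfrac12\|\nabla w(t)\|_{2}^{2}+\int_{0}^{t}\|\nabla w\|_{2}^{2}\,ds$ together with the nonlinear $q$- and $a$-contributions. For a lower bound the nonnegative integrated terms $\int_{0}^{t}\|\partial_{s}w\|_{2}^{2}$ and $\int_{0}^{t}\|\nabla w\|_{2}^{2}$ may simply be discarded, leaving the coercive quantity $\tfrac12\|w(t)\|_{2}^{2}+\tfrac12\|\nabla w(t)\|_{2}^{2}$ to be offset only by the $a$-term.

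The decisive sign comes from the principal nonlinearity. By monotonicity of $s\mapsto|s|^{p-2}s$ the pointwise quantity $\operatorname{Re}\big[(|u|^{p-2}u-|v|^{p-2}v)(\overline u-\overline v)\big]$ is nonnegative, and since $q\geq0$ as a distribution (Definition 2), the associated $q$-pairings are nonnegative; I would discard them, which is precisely why no $q$-term survives on the right-hand side. The $a$-contribution I would estimate from below through the elementary pointwise inequality $\big||u|^{\widetilde p-2}u-|v|^{\widetilde p-2}v\big|\leq C\max\{|u|^{\widetilde p-2},|v|^{\widetilde p-2}\}\,|w|$, followed by Hölder with exponents $\tfrac{\widetilde p}{\widetilde p-2}$ and $\tfrac{\widetilde p}{2}$ and $a\in L^{\infty}$; this produces $\max\{\|u\|_{\widetilde p}^{\widetilde p-2},\|v\|_{\widetilde p}^{\widetilde p-2}\}\|w\|_{\widetilde p}^{2}$. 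Normalizing by $\|\varphi\|_{Y^{\ast}}$, which scales like the first power of the $w$-norms, then linearizes the quadratic coercive terms into $\|w\|_{2}(t)+\|\nabla w\|_{2}$ and the $a$-term into the claimed $-M\max\{\cdots\}\|w\|_{\widetilde p}$.

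The main obstacle, I expect, is twofold. First, making the favorable sign of the $q$-terms rigorous in the distributional sense: one must know that $(|u|^{p-2}u-|v|^{p-2}v)\overline w$ is an admissible nonnegative test object so that $q\geq0$ may be applied, which rests on the regularity $|u|^{p-2}u\,w\in W_{0}^{1,\beta}$ furnished by Corollary 1, and one must likewise control the $q$-pairing against $\partial_{s}\overline w$, reorganizing it as in (4.4) into nonnegative convex energies. Second, the multiplier $\frac{\partial\overline w}{\partial t}+\overline w$ involves $\partial_{t}w$, which lies only in $L^{2}(R_{+};L^{2})$ for $u,v\in X$, so the $\partial_{t}$-pairing must be justified by the regularity of the solution class or by a Galerkin/approximation argument, together with the bookkeeping that turns the quadratic coercivity into first-order norms. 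Finally, it is the compact embedding $W_{0}^{1,2}\Subset L^{\widetilde p}$, valid under condition (i) since $\widetilde p<2^{\ast}$, that renders the negative $\|w\|_{\widetilde p}$-term genuinely lower order, so that this inequality supplies condition (d') / condition 4 of the general solvability theorem.
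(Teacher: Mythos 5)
Your overall strategy (pair the difference against a multiplier, use monotonicity of $s\mapsto|s|^{p-2}s$ together with $q\geq 0$ to discard the $q$-terms, and a pointwise-plus-H\"older estimate for the $a$-term) is close in spirit to the paper's, but your choice of multiplier $\varphi=\frac{\partial\overline w}{\partial t}+\overline w$ opens a gap that you acknowledge and do not close, and that in fact cannot be closed along the lines you suggest. The paper tests only against $\overline w=\overline{u-v}$, after first splitting off the $a$-term by the triangle inequality $\left\Vert A+B\right\Vert _{Y}\geq \left\Vert A\right\Vert _{Y}-\left\Vert B\right\Vert _{Y}$. Then the imaginary part of $\langle i\partial_{t}w-\Delta w+q(|u|^{p-2}u-|v|^{p-2}v),\overline w\rangle$ is $\tfrac12\frac{d}{dt}\Vert w\Vert_{2}^{2}$, which integrates to $\tfrac12\Vert w(t)\Vert_{2}^{2}$ using $w(0)=0$; the real part is $\Vert\nabla w\Vert_{2}^{2}$ plus the $q$-pairing, which is nonnegative because the cross terms obey
$$
\left\vert \langle q|u|^{p-2}u,\overline v\rangle+\langle q|v|^{p-2}v,\overline u\rangle\right\vert \leq \langle q|u|^{p-1},|v|\rangle+\langle q|v|^{p-1},|u|\rangle \leq \langle q,|u|^{p}\rangle+\langle q,|v|^{p}\rangle
$$
by Young's inequality and $q\geq 0$; the $a$-term is estimated exactly as you propose. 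No pairing against $\partial_{t}\overline w$ ever occurs in the paper's argument.

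The concrete gap in your version is this: the term $\mathrm{Re}\,\langle q(|u|^{p-2}u-|v|^{p-2}v),\partial_{t}(\overline u-\overline v)\rangle$ is not a total time derivative and has no sign. The identity $\mathrm{Re}\,\langle q|u|^{p-2}u,\partial_{t}\overline u\rangle=\frac1p\frac{d}{dt}\langle q,|u|^{p}\rangle$ that underlies (4.4) is a chain-rule identity for a single function; for a difference, the cross terms $\mathrm{Re}\,\langle q|u|^{p-2}u,\partial_{t}\overline v\rangle$ and $\mathrm{Re}\,\langle q|v|^{p-2}v,\partial_{t}\overline u\rangle$ survive, cannot be reorganized into ``nonnegative convex energies,'' and, since $q$ is merely a distribution, cannot be absorbed elsewhere. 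The same defect hits the $a$-term: your H\"older estimate applies to $\langle a(\cdots),\overline w\rangle$ but not to $\langle a(\cdots),\partial_{t}\overline w\rangle$, and the latter would inject $\Vert\partial_{t}w\Vert_{2}$, which does not occur in the target inequality. (There is also the regularity problem you note yourself: $\langle -\Delta w,\partial_{t}\overline w\rangle$ needs $\nabla\partial_{t}w$, which membership in $X$ does not supply.) The repair is simply to delete the $\partial_{t}$ part of the multiplier: with $\varphi=\overline w$ alone, every step you outline goes through and reproduces the paper's proof of (4.9).
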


\begin{proof}
Let us $u,v\in X\cap \left\{ u\left\vert \ u\left( 0,x\right) =u_{0}\left(
x\right) \right. \right\} $ and consider $\left\Vert f\left( u\right)
-f\left( v\right) \right\Vert _{Y}$ that we can estimate as 
\begin{equation*}
\left\Vert i\frac{\partial \left( u-v\right) }{\partial t}-\Delta \left(
u-v\right) +q\left( \left\vert u\right\vert ^{p-2}u-\left\vert v\right\vert
^{p-2}v\right) +a\left( \left\vert u\right\vert ^{\widetilde{p}%
-2}u-\left\vert v\right\vert ^{\widetilde{p}-2}v\right) \right\Vert _{Y}\geq
\end{equation*}%
\begin{equation*}
\left\Vert i\frac{\partial \left( u-v\right) }{\partial t}-\Delta \left(
u-v\right) +q\left( \left\vert u\right\vert ^{p-2}u-\left\vert v\right\vert
^{p-2}v\right) \right\Vert _{Y}-
\end{equation*}%
\begin{equation}
\left\Vert a\left( \left\vert u\right\vert ^{\widetilde{p}-2}u-\left\vert
v\right\vert ^{\widetilde{p}-2}v\right) \right\Vert _{Y}.  \tag{4.7}
\end{equation}%
In order that to esimate of the first adding of right side of the inequality
(4.7) we act in the following way. In beginning we set 
\begin{equation*}
\left\langle i\frac{\partial \left( u-v\right) }{\partial t},\overline{%
\left( u-v\right) }\right\rangle -\left\langle \Delta \left( u-v\right) ,%
\overline{\left( u-v\right) }\right\rangle +
\end{equation*}%
\begin{equation*}
\left\langle q\left( \left\vert u\right\vert ^{p-2}u-\left\vert v\right\vert
^{p-2}v\right) ,\overline{\left( u-v\right) }\right\rangle =\frac{i}{2}\frac{%
d}{dt}\left\langle \left( u-v\right) ,\overline{\left( u-v\right) }%
\right\rangle +
\end{equation*}%
\begin{equation}
\left\Vert \nabla \left( u-v\right) \right\Vert _{2}^{2}+\left\langle
q\left( \left\vert u\right\vert ^{p-2}u-\left\vert v\right\vert
^{p-2}v\right) ,\overline{\left( u-v\right) }\right\rangle  \tag{4.8}
\end{equation}%
and study it.

Here for the last adding takes place the inequality 
\begin{equation*}
\left\langle q\left( \left\vert u\right\vert ^{p-2}u-\left\vert v\right\vert
^{p-2}v\right) ,\overline{\left( u-v\right) }\right\rangle =\left\langle
q,\left\vert u\right\vert ^{p}\right\rangle +\left\langle q,\left\vert
v\right\vert ^{p}\right\rangle -
\end{equation*}%
\begin{equation*}
\left\langle q\left\vert u\right\vert ^{p-2}u,\overline{v}\right\rangle
-\left\langle q\left\vert v\right\vert ^{p-2}v,\overline{u}\right\rangle .
\end{equation*}%
As the expression $\left\vert \left\langle q\left\vert u\right\vert ^{p-2}u,%
\overline{v}\right\rangle +\left\langle q\left\vert v\right\vert ^{p-2}v,%
\overline{u}\right\rangle \right\vert $ has the following estimation 
\begin{equation*}
\left\vert \left\langle q\left\vert u\right\vert ^{p-2}u,\overline{v}%
\right\rangle +\left\langle q\left\vert v\right\vert ^{p-2}v,\overline{u}%
\right\rangle \right\vert \leq \left\langle q\left\vert u\right\vert
^{p-1},\left\vert v\right\vert \right\rangle +\left\langle q\left\vert
v\right\vert ^{p-1},\left\vert u\right\vert \right\rangle
\end{equation*}%
therefore we can consider the right side of (4.8) without of the last adding.

Consequently we get\ the following estimation for the first adding of the
right side of\ the inequality (4.7) 
\begin{equation*}
\left\Vert i\frac{\partial \left( u-v\right) }{\partial t}-\Delta \left(
u-v\right) +q\left( \left\vert u\right\vert ^{p-2}u-\left\vert v\right\vert
^{p-2}v\right) \right\Vert _{Y}\geq
\end{equation*}%
\begin{equation*}
K\left[ \left\Vert u-v\right\Vert _{2}\left( t\right) +\left\Vert \nabla
\left( u-v\right) \right\Vert _{2}\right] ,\ K>0
\end{equation*}%
with taking into account the equation (4.8), the last reasons and the
equation 
\begin{equation*}
\overset{t}{\underset{0}{\dint }}\left\langle \frac{\partial \left(
u-v\right) }{\partial s},\overline{\left( u-v\right) }\right\rangle ds=\frac{%
1}{2}\left\Vert u-v\right\Vert _{2}^{2}\left( t\right) ,
\end{equation*}%
whereas $u\left( 0,x\right) =v\left( 0,x\right) =u_{0}$ by choosingly, that
we need make by virtue of the condition (d') of the main theorem.

Now consider the second adding of right side of\ the inequality (4.7), for
which we have 
\begin{equation*}
\left\vert \left\langle a\left( \left\vert u\right\vert ^{\widetilde{p}%
-2}u-\left\vert v\right\vert ^{\widetilde{p}-2}v\right) ,\overline{\left(
u-v\right) }\right\rangle \right\vert =\underset{\Omega }{\dint }a\left(
\left\vert u\right\vert ^{\widetilde{p}-2}u-\left\vert v\right\vert ^{%
\widetilde{p}-2}v\right) \overline{\left( u-v\right) }dx\leq
\end{equation*}%
\begin{equation*}
\underset{\Omega }{\dint }a\ \varphi \left( u,v\right) \ \left\vert
u-v\right\vert ^{2}dx,\quad 0\leq \varphi \left( u,v\right) \leq M\ \left(
\max \left\{ \left\vert u\right\vert ,\left\vert v\right\vert \right\}
\right) ^{\widetilde{p}-2}
\end{equation*}%
where $M>0$ be some number and $\varphi \left( u,v\right) $ be a continuous
function.

Taking into account the last inequalities in (4.7) we obtain 
\begin{equation*}
\left\Vert f\left( u\right) -f\left( v\right) \right\Vert _{Y}\geq
\left\Vert u-v\right\Vert _{2}\left( t\right) +\left\Vert \nabla \left(
u-v\right) \right\Vert _{2}-
\end{equation*}%
\begin{equation}
M\max \left\{ \left\Vert u\right\Vert _{\widetilde{p}}^{\widetilde{p}%
-2};\left\Vert v\right\Vert _{\widetilde{p}}^{\widetilde{p}-2}\right\}
\left\Vert u-v\right\Vert _{\widetilde{p}}.  \tag{4.9}
\end{equation}
\end{proof}

3. Now we will conduct a priori estimations for a solutions of the problem.

\begin{proposition}
Let all conditions of Theorem 1 are fulfilled, then all solutions belong to
bounded subset of the space 
\begin{equation*}
X\equiv W^{1,2}\left( R_{+};L^{2}\left( \Omega \right) \right) \cap
L^{m}\left( R_{+};W_{0}^{1,2}\left( \Omega \right) \right) \cap
\end{equation*}%
\begin{equation*}
\left\{ v\left\vert \ \left\vert v\right\vert ^{p}\right. \in L^{\beta
}\left( R_{+};W_{0}^{1,\beta }\left( \Omega \right) \right) \right\} \cap
\left\{ u\left\vert \ u\left( 0,x\right) =u_{0}\left( x\right) \right.
\right\} ,
\end{equation*}%
i.e. there is constants $K\equiv K\left( \left\Vert h\right\Vert
_{2,Q_{T}},\left\Vert u_{0}\right\Vert _{W^{1,2}},\left\Vert q\right\Vert
_{W^{-1,2}},\left\Vert a\right\Vert ,p,\widetilde{p}\right) $ such that $%
\left\Vert u\right\Vert _{X}\leq K.$
\end{proposition}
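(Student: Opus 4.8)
The plan is to read the weak identity of Definition 4 against the test element $g(u)=\frac{\partial u}{\partial t}+u$ furnished by condition (c), so that for a solution $u$ one has $f(u)=h$ and hence $\langle f(u),\frac{\partial\bar{u}}{\partial s}+\bar{u}\rangle=\langle h,\frac{\partial\bar{u}}{\partial s}+\bar{u}\rangle$ for a.e.\ $s$. Substituting this into the two formulas of Proposition 2 turns (4.5) into an identity for $\frac{1}{2}\|u(t)\|_2^2+\int_0^t\|\partial_s u\|_2^2\,ds$ and (4.6) into a lower bound for the dissipative quantity $\eta\|\nabla u(t)\|_2^2+\eta\int_0^t\|\nabla u\|_2^2\,ds+\delta_1\int_0^t\langle q,|u|^p\rangle\,ds+\delta_2\langle q,|u|^p\rangle(t)$. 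The whole leverage here is that, by Definition 2 and $q\geq 0$, each $q$-term is nonnegative, while $\delta_1,\delta_2\geq 0$ and $\eta>0$ by condition (ii).

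First I would estimate the forcing integrals $\int_0^t\langle h,\frac{\partial\bar{u}}{\partial s}+\bar{u}\rangle\,ds$. Cauchy--Schwarz and Young's inequality give $\int_0^t|\langle h,\partial_s\bar{u}\rangle|\,ds\leq\frac{1}{2}\int_0^t\|\partial_s u\|_2^2\,ds+\frac{1}{2}\int_0^t\|h\|_2^2\,ds$, so the $\partial_s u$-contribution is absorbed into the left of (4.5); the symmetric bound on the real side is absorbed after combination. For the $\bar{u}$-contribution I would invoke the Poincar\'{e} inequality $\|u\|_2\leq C\|\nabla u\|_2$ on $W_0^{1,2}(\Omega)$ and split by Young so that $\int_0^t\|h\|_2\|u\|_2\,ds$ is controlled by a small multiple of $\int_0^t\|\nabla u\|_2^2\,ds$ plus a multiple of $\int_0^t\|h\|_2^2\,ds$; this is the decisive device that lets the forcing be absorbed into the gradient dissipation rather than closing with Gronwall. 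Adding the estimated forms of (4.5) and (4.6) cancels the $\|\partial_s u\|_2^2$ terms and leaves a positive fraction of the dissipation on the left, yielding
\[
\tfrac{1}{2}\|u(t)\|_2^2+\eta\|\nabla u(t)\|_2^2+c\int_0^t\|\nabla u\|_2^2\,ds+\delta_1\int_0^t\langle q,|u|^p\rangle\,ds\leq\tfrac{1}{2}\|u_0\|_2^2+C_1\int_0^t\|h\|_2^2\,ds+C,
\]
with $c>0$ and $C$ as in (4.6). Since $h\in L^2(Q)$, the term $\int_0^t\|h\|_2^2\,ds$ is majorised by $\|h\|_{L^2(Q)}^2$ uniformly in $t$, so the right side is bounded independently of $t$.

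From this uniform bound I read off $u\in L^\infty(R_+;W_0^{1,2}(\Omega))\cap L^2(R_+;W_0^{1,2}(\Omega))$; interpolating these (for each $t$, $\|u(t)\|^m_{W^{1,2}}\leq\|u\|^{m-2}_{L^\infty(W^{1,2})}\|u(t)\|^2_{W^{1,2}}$) gives membership in $L^m(R_+;W_0^{1,2}(\Omega))$ for every $m\geq 2$, and the $L^2$-in-time control together with $\partial_t u\in L^2(R_+;L^2(\Omega))$ (recovered by returning to the absorbed form of (4.5), whose right side is now bounded) yields $u\in W^{1,2}(R_+;L^2(\Omega))$; the constraint $u(0,x)=u_0$ is built into $X$. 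For the component $|u|^p\in L^\beta(R_+;W_0^{1,\beta}(\Omega))$ I would apply Lemma 1 pointwise in $t$ in its homogeneous form $\|\nabla|u|^p\|_\beta^\beta\leq C\bigl(\|\nabla u\|_2^2+\|u\|_{2^\ast}^{2^\ast}\bigr)$, integrate over $R_+$, and bound $\int_0^\infty\|\nabla u\|_2^{2^\ast}\,ds$ by $\|\nabla u\|^{2^\ast-2}_{L^\infty(W^{1,2})}\int_0^\infty\|\nabla u\|_2^2\,ds$.

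I expect the main obstacle to be securing the \emph{global} (uniform in $t\in R_+$) character of the estimate: one must route the entire forcing through the gradient dissipation via Poincar\'{e} and the positivity of the $q$-terms so as to avoid an exponentially growing Gronwall bound, and one must ensure the Lemma 1 estimate for $|u|^p$ is free of the additive domain constant $c_0$, since such a constant would be non-integrable over the half-line and would destroy the $L^\beta(R_+;\cdot)$ conclusion.
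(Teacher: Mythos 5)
Your proposal is correct and follows essentially the same route as the paper's own proof: test with $g\left( u\right) =\frac{\partial u}{\partial t}+u$, substitute $f\left( u\right) =h$ into the identities (4.5)--(4.6) of Proposition 2, absorb the forcing via Young and Poincar\'{e} into the dissipation (using $\eta >0$, $\delta _{1},\delta _{2}\geq 0$ and the nonnegativity of $q$), and then read off the uniform-in-$t$ bounds, the $L^{m}$ membership (your interpolation is exactly the paper's Remark 1), and the $\left\vert u\right\vert ^{p}$ component via Lemma 1. Your insistence on a homogeneous form of the Lemma 1 estimate, free of the additive constant $c_{0}$ that would be non-integrable over $R_{+}$, is a detail the paper glosses over, but the substance of the two arguments coincides.
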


\begin{proof}
In the beginning we note that from examination of the expression $%
\left\langle f\left( u\right) ,\overline{u}\right\rangle $ in the proof of
Proposition 2 we get the following inequations 
\begin{equation*}
\frac{1}{2}\left\Vert u\left( t\right) \right\Vert _{2}^{2}-\frac{1}{2}%
\left\Vert u_{0}\right\Vert _{2}^{2}\leq \overset{t}{\underset{0}{\dint }}%
\left\vert \left\langle h,\overline{u}\right\rangle \right\vert ds\leq 
\overset{t}{\underset{0}{\dint }}\left\Vert h\left( s\right) \right\Vert
_{2}\left\Vert u\left( s\right) \right\Vert _{2}ds
\end{equation*}%
and 
\begin{equation*}
\left\Vert \nabla u\right\Vert _{2}^{2}+\left\langle q\left( x\right)
\left\vert u\right\vert ^{p-2}u,\overline{u}\right\rangle +\left\langle
a\left( x\right) \left\vert u\right\vert ^{\widetilde{p}-2}u,\overline{u}%
\right\rangle \leq \left\vert \left\langle h,\overline{u}\right\rangle
\right\vert \Longrightarrow
\end{equation*}%
\begin{equation*}
\left\Vert \nabla u\right\Vert _{2}^{2}+\left( 1-k_{1}\right) \left\langle
q\left( x\right) ,\left\vert u\right\vert ^{p}\right\rangle -k_{0}\left\Vert
u\right\Vert _{p_{2}}^{2}\leq \left\vert \left\langle h,\overline{u}%
\right\rangle \right\vert \Longrightarrow
\end{equation*}%
\begin{equation*}
\left( 1-k_{0}C^{2}\left( p_{2}\right) \right) \left\Vert \nabla u\left(
t\right) \right\Vert _{2}^{2}\leq \left\Vert h\left( t\right) \right\Vert
_{2}\left\Vert u\left( t\right) \right\Vert _{2},
\end{equation*}%
as $k_{1}\leq 1$ and $k_{0}C^{2}\left( p_{2}\right) <1$ by virtue of the
condition (\textit{ii}). Then we get that the following estimations are true 
\begin{equation*}
\left( 1-k_{0}C^{2}\left( p_{2}\right) \right) \left\Vert \nabla u\left(
t\right) \right\Vert _{2}^{2}-\varepsilon \left\Vert u\left( t\right)
\right\Vert _{2}^{2}\leq c\left( \varepsilon \right) \left\Vert h\left(
t\right) \right\Vert _{2}^{2}
\end{equation*}%
or 
\begin{equation*}
\left\Vert \nabla u\left( t\right) \right\Vert _{2}\leq \widehat{c}\left(
\varepsilon \right) \left\Vert h\left( t\right) \right\Vert _{2},\ a.e.t\geq
0
\end{equation*}%
as far as $\left\Vert u\left( t\right) \right\Vert _{2}\leq C_{1}\left( mes\
\Omega \right) \left\Vert \nabla u\left( t\right) \right\Vert _{2}$ for $%
\forall u\left( t\right) \in W_{0}^{1,2}\left( \Omega \right) $ by the
embedding theorems, where $\widehat{c}\left( \varepsilon \right) \equiv 
\widehat{c}\left( \varepsilon ,mes\ \Omega ,k_{0}C\right) $, $C_{1}\left(
mes\ \Omega \right) >0$ are constants;

moreover 
\begin{equation}
\left\Vert u\right\Vert _{L^{2}\left( R_{+};W_{0}^{1,2}\left( \Omega \right)
\right) }\leq \widehat{c}_{1}\left( \varepsilon \right) \left\Vert
h\right\Vert _{L^{2}\left( R_{+};W_{0}^{1,2}\left( \Omega \right) \right) }+%
\widehat{c}_{2}\left\Vert u_{0}\right\Vert _{2}.  \tag{*}
\end{equation}

Thus we obtain that $u\left( t,x\right) $ belong to the bounded subset of $%
L^{2}\left( R_{+};W_{0}^{1,2}\left( \Omega \right) \right) $ for given $h\in
L^{2}\left( Q\right) $.

Using the equations (4.3) and (4.4), and also the estimates (4.5) and (4.6)
we get 
\begin{equation*}
\left\vert \overset{t}{\underset{0}{\dint }}\func{Re}\left\langle h,\frac{%
\partial \overline{u}}{\partial s}+\overline{u}\right\rangle ds\right\vert
\geq \overset{t}{\underset{0}{\dint }}\func{Re}\left\langle f\left( u\right)
,\frac{\partial \overline{u}}{\partial s}+\overline{u}\right\rangle ds\geq
\end{equation*}%
\begin{equation*}
\eta _{2}\left\Vert \nabla u\left( t\right) \right\Vert _{2}^{2}+\eta _{1}%
\overset{t}{\underset{0}{\dint }}\left\Vert \nabla u\left( s\right)
\right\Vert _{2}^{2}ds+\delta _{1}\overset{t}{\underset{0}{\dint }}%
\left\langle q\left( x\right) \left\vert u\right\vert ^{p-2}u,\overline{u}%
\right\rangle \left( s\right) ds+
\end{equation*}%
\begin{equation}
\delta _{2}\left\langle q\left( x\right) \left\vert u\right\vert ^{p-2}u,%
\overline{u}\right\rangle \left( t\right) -C\left( \left\Vert
u_{0}\right\Vert _{W^{1,2}},\left\Vert q\right\Vert _{W^{-1,2}},\left\Vert
a\right\Vert _{m},p,\widetilde{p}\right)  \tag{4.10}
\end{equation}%
and 
\begin{equation*}
\left\vert \overset{t}{\underset{0}{\dint }}\func{Im}\left\langle h,\frac{%
\partial \overline{u}}{\partial s}+\overline{u}\right\rangle ds\right\vert
\geq \overset{t}{\underset{0}{\dint }}\func{Im}\left\langle f\left( u\right)
,\frac{\partial \overline{u}}{\partial s}+\overline{u}\right\rangle \equiv
\end{equation*}%
\begin{equation}
\frac{1}{2}\left\Vert u\left( t\right) \right\Vert _{2}^{2}-\frac{1}{2}%
\left\Vert u_{0}\right\Vert _{2}^{2}+\overset{t}{\underset{0}{\dint }}%
\left\Vert \frac{\partial u}{\partial s}\right\Vert _{L_{2}\left( \Omega
\right) }^{2}ds  \tag{4.11}
\end{equation}

Then from (4.10) we get the estimation 
\begin{equation*}
\eta _{2}\left\Vert \nabla u\left( t\right) \right\Vert _{2}^{2}+\eta _{1}%
\overset{t}{\underset{0}{\dint }}\left\Vert \nabla u\left( s\right)
\right\Vert _{2}^{2}ds+\delta _{1}\overset{t}{\underset{0}{\dint }}%
\left\langle q\left( x\right) \left\vert u\right\vert ^{p-2}u,\overline{u}%
\right\rangle \left( s\right) ds+
\end{equation*}%
\begin{equation*}
\delta _{2}\left\langle q\left( x\right) \left\vert u\right\vert ^{p-2}u,%
\overline{u}\right\rangle \left( t\right) -C\left( \left\Vert
u_{0}\right\Vert _{W^{1,2}},\left\Vert q\right\Vert _{W^{-1,2}},\left\Vert
a\right\Vert _{m^{\prime }},p,\widetilde{p}\right) \leq
\end{equation*}%
\begin{equation*}
\overset{t}{\underset{0}{\dint }}\left\Vert h\right\Vert _{2}\left(
\left\Vert \frac{\partial u}{\partial s}\right\Vert _{2}+\left\Vert
u\right\Vert _{2}\right) ds,\text{ \ \ }a.e.\ t>0,
\end{equation*}%
and from (4.11) we obtain 
\begin{equation*}
\frac{1}{2}\left\Vert u\left( t\right) \right\Vert _{2}^{2}-\frac{1}{2}%
\left\Vert u_{0}\right\Vert _{2}^{2}+\overset{t}{\underset{0}{\dint }}%
\left\Vert \frac{\partial u}{\partial s}\right\Vert _{L_{2}\left( \Omega
\right) }^{2}ds\leq
\end{equation*}%
\begin{equation*}
\overset{t}{\underset{0}{\dint }}\left\Vert h\right\Vert _{2}\left(
\left\Vert \frac{\partial u}{\partial s}\right\Vert _{2}+\left\Vert
u\right\Vert _{2}\right) ds,\text{ \ \ }a.e.\ t>0,
\end{equation*}%
then with combine of last two inequations we get 
\begin{equation*}
\eta \left\Vert \nabla u\left( t\right) \right\Vert _{2}^{2}+\eta \overset{t}%
{\underset{0}{\dint }}\left\Vert \nabla u\left( s\right) \right\Vert
_{2}^{2}ds+\delta _{1}\overset{t}{\underset{0}{\dint }}\left\langle q\left(
x\right) \left\vert u\right\vert ^{p-2}u,\overline{u}\right\rangle \left(
s\right) ds+
\end{equation*}%
\begin{equation*}
\delta _{2}\left\langle q\left( x\right) \left\vert u\right\vert ^{p-2}u,%
\overline{u}\right\rangle \left( t\right) -C\left( \left\Vert
u_{0}\right\Vert _{W^{1,2}},\left\Vert q\right\Vert _{W^{-1,2}},\left\Vert
a\right\Vert ,p,\widetilde{p}\right) +
\end{equation*}%
\begin{equation*}
\frac{1}{2}\left\Vert u\left( t\right) \right\Vert _{2}^{2}-\frac{1}{2}%
\left\Vert u_{0}\right\Vert _{2}^{2}+\overset{t}{\underset{0}{\dint }}%
\left\Vert \frac{\partial u}{\partial s}\right\Vert _{2}^{2}ds\leq
\varepsilon _{1}\overset{t}{\underset{0}{\dint }}\left\Vert \frac{\partial u%
}{\partial s}\right\Vert _{2}^{2}ds+
\end{equation*}%
\begin{equation*}
\varepsilon _{2}\overset{t}{\underset{0}{\dint }}\left\Vert u\right\Vert
_{2}^{2}ds+C\left( \varepsilon _{1},\varepsilon _{2}\right) \overset{t}{%
\underset{0}{\dint }}\left\Vert h\right\Vert _{2}^{2}ds,\text{ \ \ }a.e.\
t>0.
\end{equation*}

or 
\begin{equation*}
\eta \left\Vert \nabla u\left( t\right) \right\Vert _{2}^{2}+\widetilde{\eta 
}\overset{t}{\underset{0}{\dint }}\left\Vert \nabla u\left( s\right)
\right\Vert _{2}^{2}ds+\delta _{1}\overset{t}{\underset{0}{\dint }}%
\left\langle q\left( x\right) \left\vert u\right\vert ^{p-2}u,\overline{u}%
\right\rangle \left( s\right) ds+
\end{equation*}%
\begin{equation*}
\delta _{2}\left\langle q\left( x\right) \left\vert u\right\vert ^{p-2}u,%
\overline{u}\right\rangle \left( t\right) +\frac{1}{2}\left\Vert u\left(
t\right) \right\Vert _{2}^{2}+\left( 1-\varepsilon _{1}\right) \overset{t}{%
\underset{0}{\dint }}\left\Vert \frac{\partial u}{\partial s}\right\Vert
_{2}^{2}ds\leq
\end{equation*}%
\begin{equation*}
C\left( \varepsilon _{1},\varepsilon _{2}\right) \overset{\infty }{\underset{%
0}{\dint }}\left\Vert h\right\Vert _{2}^{2}ds+C_{1}\left( \left\Vert
u_{0}\right\Vert _{W^{1,2}},\left\Vert q\right\Vert _{W^{-1,2}},\left\Vert
a\right\Vert ,p,\widetilde{p}\right) .
\end{equation*}

Moreover from here follows 
\begin{equation*}
\frac{1}{2}\left\Vert \nabla u\left( t\right) \right\Vert _{2}^{2}-\frac{1}{2%
}\left\Vert \nabla u\left( 0\right) \right\Vert _{2}^{2}+\overset{t}{%
\underset{0}{\dint }}\left\Vert \nabla u\left( s\right) \right\Vert
_{2}^{2}ds+
\end{equation*}%
\begin{equation*}
\overset{t}{\underset{0}{\dint }}\left\langle q\left( x\right) \left\vert
u\right\vert ^{p-2}u,\overline{u}\right\rangle \left( s\right) ds+\overset{t}%
{\underset{0}{\dint }}\left\langle a\left( x\right) \left\vert u\right\vert
^{\widetilde{p}-2}u,\overline{u}\right\rangle \left( s\right) ds+
\end{equation*}%
\begin{equation*}
\frac{1}{p}\left\langle q\left( x\right) ,\left\vert u\right\vert
^{p}\right\rangle \left( t\right) -\frac{1}{p}\left\langle q\left( x\right)
,\left\vert u\left( 0\right) \right\vert ^{p}\right\rangle +\frac{1}{%
\widetilde{p}}\left\langle a,\left\vert u\right\vert ^{\widetilde{p}%
}\right\rangle \left( t\right) -
\end{equation*}%
\begin{equation*}
\frac{1}{\widetilde{p}}\left\langle a,\left\vert u\left( 0\right)
\right\vert ^{\widetilde{p}}\right\rangle \leq \left( \varepsilon ^{-1}+%
\frac{1}{2}\right) \overset{t}{\underset{0}{\dint }}\left\Vert h\left(
s\right) \right\Vert _{2}^{2}ds+
\end{equation*}%
\begin{equation*}
\varepsilon \overset{t}{\underset{0}{\dint }}\left\Vert u\left( s\right)
\right\Vert _{2}^{2}ds+\frac{1}{2}\overset{t}{\underset{0}{\dint }}%
\left\Vert \frac{\partial u}{\partial s}\right\Vert _{2}^{2}ds+
\end{equation*}%
\begin{equation*}
C\left( \left\Vert \nabla u_{0}\right\Vert ,\left\Vert q\right\Vert
_{W^{-1,2}},\left\Vert u_{0}\right\Vert _{2^{\ast }},\left\Vert a\right\Vert
_{m},p,\widetilde{p}\right) ,
\end{equation*}%
by virtue of the condition (1.2).

Consequently we get the following inequation 
\begin{equation*}
\eta \left\Vert \nabla u\left( t\right) \right\Vert _{2}^{2}+\eta _{1}\left(
\varepsilon \right) \overset{t}{\underset{0}{\dint }}\left\Vert \nabla
u\left( s\right) \right\Vert _{2}^{2}ds\leq \varepsilon ^{-1}\overset{t}{%
\underset{0}{\dint }}\left\Vert h\left( s\right) \right\Vert _{2}^{2}ds+
\end{equation*}%
\begin{equation*}
C\left( \left\Vert \nabla u_{0}\right\Vert _{2},\left\Vert q\right\Vert
_{W^{-1,2}},\left\Vert u_{0}\right\Vert _{2^{\ast }},\left\Vert h\right\Vert
_{2},\left\Vert a\right\Vert _{m},p,\widetilde{p}\right)
\end{equation*}%
in the other words we obtain 
\begin{equation}
\overset{t}{\underset{0}{\dint }}\left\Vert \nabla u\left( s\right)
\right\Vert _{2}^{2}ds\leq \widetilde{D}\left( \varepsilon ^{-1},...\right)
\left( \widetilde{\eta }_{1}\left( \varepsilon \right) \right) ^{-1}\left(
1-e^{-\widetilde{\eta }_{1}\left( \varepsilon \right) t}\right)  \tag{4.12}
\end{equation}%
where $\widetilde{D}\left( \varepsilon ^{-1},...\right) =$ $\widetilde{D}%
\left( \varepsilon ^{-1},\left\Vert u_{0}\right\Vert _{W^{1,2}},\left\Vert
q\right\Vert _{W^{-1,2}},\left\Vert h\right\Vert _{2},\left\Vert
a\right\Vert _{m},p,\widetilde{p}\right) $

Consequently we obtain, that any solution of the considered problem under
the posed conditions satisfies the following inclusion 
\begin{equation*}
u\in W^{1,2}\left( R_{+};L^{2}\left( \Omega \right) \right) \cap L^{\infty
}\left( R_{+};W_{0}^{1,2}\left( \Omega \right) \right) \cap
\end{equation*}%
\begin{equation}
\left\{ v\left\vert \ \left\vert v\right\vert ^{p}\right. \in L^{\beta
}\left( R_{+};W_{0}^{1,\beta }\left( \Omega \right) \right) \right\} \cap
\left\{ u\left\vert \ u\left( 0,x\right) =u_{0}\left( x\right) \right.
\right\} \equiv X,  \tag{4.13}
\end{equation}%
in addition the preimage of each bounded neighborhood of zero from $%
L^{2}\left( Q\right) \times W_{0}^{1,2}\left( \Omega \right) $ under
operator $f$ is the bounded neighborhood of zero of the space determined by
(4.13), where $m\geq 2^{\ast }$ and $\beta >1$ is denoted by Lemma 1. We
note that here is used the inclution given in next remark.
\end{proof}

\begin{remark}
Let $Z$ is a Banach space, then 
\begin{equation*}
L^{2}\left( R;Z\right) \cap L^{\infty }\left( R;Z\right) \subset L^{m}\left(
R;Z\right) ,\ 2\leq m<\infty
\end{equation*}%
holds.
\end{remark}

From here we get that the condition (c) is fulfilled for the operator $f$
generated by the posed problem and the operator $g\left( v\right) \equiv 
\frac{\partial v}{\partial t}+v$ for any $v\in W^{1,2}\left(
R_{+};W_{0}^{1,2}\left( \Omega \right) \right) \cap L^{m}\left(
R_{+};W_{0}^{1,2}\left( \Omega \right) \right) $ that is dense in the
requisit space defined in (4.13).

4. Now we can show that the operator $f$ satisfies the condition (d').

\begin{proposition}
Let all conditions of Theorem 1 are fulfilled, then the operator $f$
satisfies the condition (d').
\end{proposition}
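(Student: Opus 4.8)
The plan is to verify condition (d') straight from its definition, combining the a priori bounds of Proposition 4, the lower estimate (4.9) of Proposition 3, and a compactness argument. Recall that (d') demands, for every precompact $M\subseteq \func{Im}\,f\subset Y$, a subsequence $M_{0}\subseteq M$ and a precompact $G\subset B_{r_{0}}^{X}\left( x_{0}\right) $ with $f^{-1}\left( M_{0}\right) \subseteq G$ and $f\left( G\cap D\left( f\right) \right) \supseteq M_{0}$. The second inclusion is automatic: every $y\in M_{0}\subseteq \func{Im}\,f$ has a preimage $u\in f^{-1}\left( M_{0}\right) \subseteq G$, whence $y=f\left( u\right) \in f\left( G\cap D\left( f\right) \right) $. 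So the whole task reduces to producing, after passing to a subsequence, a precompact set $G$ containing $f^{-1}\left( M_{0}\right) $ in $X$; equivalently, to showing that every sequence $\left( u_{n}\right) $ with $f\left( u_{n}\right) =y_{n}$ and $\left( y_{n}\right) $ Cauchy in $Y$ admits a subsequence converging strongly in $X$.

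First I would fix such a sequence. Since $M$ is precompact we extract a subsequence, still written $\left( y_{n}\right) $, that converges in $Y$, and set $M_{0}=\left\{ y_{n}\right\} $; choose $u_{n}\in f^{-1}\left( y_{n}\right) $. By Proposition 4 the preimage of the bounded set $M_{0}$ is bounded in $X$, so $\left( u_{n}\right) $ is bounded in $W^{1,2}\left( R_{+};L^{2}\left( \Omega \right) \right) \cap L^{m}\left( R_{+};W_{0}^{1,2}\left( \Omega \right) \right) $; in particular $\partial _{t}u_{n}$ is bounded in $L^{2}\left( R_{+};L^{2}\left( \Omega \right) \right) $ and, by (4.12), $\left( u_{n}\right) $ is bounded in $L^{2}\left( R_{+};W_{0}^{1,2}\left( \Omega \right) \right) $. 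Because $\widetilde{p}<2^{\ast }=\frac{2n}{n-2}$ by condition (i), the embedding $W_{0}^{1,2}\left( \Omega \right) \Subset L^{\widetilde{p}}\left( \Omega \right) $ is compact, so an Aubin--Lions argument on each interval $\left( 0,T\right) $, together with the tail bound furnished by (4.12), yields (after a diagonal extraction) a further subsequence with $u_{n}\to u$ strongly in $Z\equiv L^{\widetilde{p}}\left( Q\right) $. Consequently $\left\Vert u_{n}-u_{m}\right\Vert _{\widetilde{p}}\to 0$, while $\max \left\{ \left\Vert u_{n}\right\Vert _{\widetilde{p}}^{\widetilde{p}-2};\left\Vert u_{m}\right\Vert _{\widetilde{p}}^{\widetilde{p}-2}\right\} $ stays bounded by the $W_{0}^{1,2}$ bound.

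Next I would upgrade this to strong convergence in $X$. Applying Proposition 3 to the pair $\left( u_{n},u_{m}\right) $ gives
\begin{equation*}
\left\Vert u_{n}-u_{m}\right\Vert _{2}\left( t\right) +\left\Vert \nabla \left( u_{n}-u_{m}\right) \right\Vert _{2}\leq \left\Vert y_{n}-y_{m}\right\Vert _{Y}+M\max \left\{ \left\Vert u_{n}\right\Vert _{\widetilde{p}}^{\widetilde{p}-2};\left\Vert u_{m}\right\Vert _{\widetilde{p}}^{\widetilde{p}-2}\right\} \left\Vert u_{n}-u_{m}\right\Vert _{\widetilde{p}},
\end{equation*}
whose right-hand side tends to $0$ and is independent of $t$: the first term because $\left( y_{n}\right) $ is Cauchy in $Y$, the second by the previous paragraph. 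Hence $\left( u_{n}\right) $ is Cauchy in the norm controlling $\left\Vert \cdot \right\Vert _{2}\left( t\right) $ and $\left\Vert \nabla \cdot \right\Vert _{2}$, so it converges strongly in $L^{\infty }\left( R_{+};W_{0}^{1,2}\left( \Omega \right) \right) $, and interpolation with the inclusion of the Remark upgrades this to convergence in $L^{m}\left( R_{+};W_{0}^{1,2}\left( \Omega \right) \right) $. It remains to recover strong convergence of the time derivatives: writing the energy identity (4.5) for $u_{n}$ and passing to the limit (the pairing of $y_{n}\to y$ against $\partial _{s}\overline{u_{n}}+\overline{u_{n}}$ converges since $u_{n}\to u$ strongly and $\partial _{s}u_{n}\rightharpoonup \partial _{s}u$ weakly in $L^{2}$, while $\left\Vert u_{n}\left( t\right) \right\Vert _{2}\to \left\Vert u\left( t\right) \right\Vert _{2}$) forces $\int_{0}^{t}\left\Vert \partial _{s}u_{n}\right\Vert _{2}^{2}ds\to \int_{0}^{t}\left\Vert \partial _{s}u\right\Vert _{2}^{2}ds$; combined with the weak convergence this gives $\partial _{t}u_{n}\to \partial _{t}u$ strongly in $L^{2}\left( R_{+};L^{2}\left( \Omega \right) \right) $. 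Thus $u_{n}\to u$ in $X$, and taking $G$ to be the closure in $X$ of $f^{-1}\left( M_{0}\right) $ yields a precompact set with $f^{-1}\left( M_{0}\right) \subseteq G$, completing the verification of (d').

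The step I expect to be the main obstacle is the compactness on the unbounded domain $R_{+}$: Aubin--Lions compactness is intrinsically local in time, so the delicate point is to control the tails as $t\to \infty $ uniformly in $n$, for which the global integral bound (4.12), i.e.\ the uniform membership $u_{n}\in L^{2}\left( R_{+};W_{0}^{1,2}\left( \Omega \right) \right) $, must be exploited to guarantee tightness and prevent loss of mass at infinity; the accompanying recovery of strong $L^{2}$ convergence of $\partial _{t}u_{n}$ through the energy identity (4.5) is the second point requiring care.
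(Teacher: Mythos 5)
Your proposal shares the paper's skeleton --- boundedness of preimages (Proposition 4), a compactness step, and inequality (4.9) of Proposition 3 to upgrade weak to strong convergence --- but it diverges from the paper in its target, and that divergence is where it breaks. The paper does not verify (d') literally: it proves instead that the image $f(X)$ is closed in $Y$, which is the alternative hypothesis admitted by Theorems 2 and 3 (``if the image \dots is closed (or is fulfilled the condition (d) or (d'))''). That route only needs enough convergence of a preimage sequence $u_{k_{j}}$ to pass to the limit in $f\left( u_{k_{j}}\right) =h_{k_{j}}\rightarrow h_{0}$, namely strong convergence in $L^{2}\left( R_{+};W_{0}^{1,2}\left( \Omega \right) \right) $ and $L^{m}\left( R_{+};L^{2}\left( \Omega \right) \right) $ together with continuity of the linear part; it never requires precompactness of preimages in the full norm of $X$. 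Your literal verification of (d') must produce a set $G$ precompact in $X$ itself, hence strong convergence of $\partial _{t}u_{n}$ in $L^{2}\left( R_{+};L^{2}\left( \Omega \right) \right) $ as well, and that extra burden is exactly where the two genuine gaps appear.

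First, the compactness step: you claim (4.12) furnishes the uniform-in-$n$ tail control needed to combine Aubin--Lions on each $\left( 0,T\right) $ with a diagonal extraction and conclude $u_{n}\rightarrow u$ in $L^{\widetilde{p}}\left( Q\right) $, $Q=R_{+}\times \Omega $. It does not: (4.12) is only a uniform upper bound for $\int_{0}^{t}\left\Vert \nabla u_{n}\left( s\right) \right\Vert _{2}^{2}ds$ (hence for the whole integral over $R_{+}$), and an upper bound on the total integral gives no uniform smallness of $\int_{T}^{\infty }\left\Vert \nabla u_{n}\right\Vert _{2}^{2}ds$ --- a sequence whose gradient mass translates off to $t=\infty $ satisfies (4.12) yet loses compactness precisely in the tails. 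The paper sidesteps this by invoking the compact embedding $X\Subset L^{m}\left( R_{+};L^{\ell }\left( \Omega \right) \right) $, $1<\ell <\frac{2n}{n-2}$, cited from [30]; a self-contained replacement needs that result or an additional uniform decay estimate, not (4.12). Second, your recovery of strong convergence of the time derivatives is circular as written: passing to the limit in (4.5) ``forces'' $\int_{0}^{t}\left\Vert \partial _{s}u_{n}\right\Vert _{2}^{2}ds\rightarrow \int_{0}^{t}\left\Vert \partial _{s}u\right\Vert _{2}^{2}ds$ only if the limit pair satisfies the same identity, i.e.\ only if $f\left( u\right) =y$, which you never establish. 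That equality is exactly what the paper proves ($f\left( u_{k_{j}}\right) \rightarrow f\left( u_{0}\right) =h_{0}$, using continuity of the linear terms) as the conclusion of its closedness argument. Repairing your step requires first proving $f\left( u\right) =y$ along the paper's lines and only then combining weak convergence with norm convergence in the Hilbert space $L^{2}\left( R_{+};L^{2}\left( \Omega \right) \right) $; but once that is done, the paper's closedness route already finishes the proof, and the detour through literal (d') is unnecessary.
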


\begin{proof}
More exactly, we will prove that the image $f(X)$ is the closed subset of $Y$%
. As the conditions (a), (b) and (c) of the main theorem are fulfilled for
the operator $f:X\rightarrow Y$ then we get, that $f(X)$ contains a dense
subset of the space $Y$ by virtue of the first statement of this theorem.

So, let us the sequence $\left\{ h_{k}\right\} _{k=1}^{\infty }\subset f(X)$
is the fundamental sequence in $Y$ that converge to an element $h_{0}\in Y$,
since $f(X)$ contains a dense subset of the space $Y$ therefore for any $%
h_{0}\in Y$ there exists a sequence of such type. As the sequence $\left\{
h_{k}\right\} _{k=1}^{\infty }$ is a bounded subset of $Y$, and consequently 
$f^{-1}\left( \left\{ h_{k}\right\} _{k=1}^{\infty }\right) $ belong to the
bounded subset $M_{0}$ of $X$ by virtue of the condition (c), which is
proved in the step 3. It is known that $X$ is the reflexive space therefore
we can choose a subsequence $\left\{ u_{k_{j}}\right\} _{j=1}^{\infty
}\subset M_{0}$ of $f^{-1}\left( \left\{ h_{k}\right\} _{k=1}^{\infty
}\right) $ such that $u_{k_{j}}\in f^{-1}\left( h_{k_{j}}\right) $, $%
k_{j}\nearrow \infty $, and $\left\{ u_{k_{j}}\right\} _{j=1}^{\infty }$
weakly converge in $X$, i.e. $u_{k_{j}}\rightharpoonup u_{0}\in X$. Moreover
it is known that $X\Subset L^{m}\left( R_{+};L^{\ell }\left( \Omega \right)
\right) $ is compact, where $1<\ell <\frac{2n}{n-2}$ if $n\geq 3$ (see, for
example, [30] and \ its references). Then the sequence $\left\{
u_{k_{j}}\right\} _{j=1}^{\infty }$ have a subsequence, that strongly
converge in the space $L^{m}\left( R_{+};L^{\ell }\left( \Omega \right)
\right) $, which for simplicity we denote also by $\left\{ u_{k_{j}}\right\}
_{j=1}^{\infty }$, i.e. we assume that $\left\{ u_{k_{j}}\right\}
_{j=1}^{\infty }$ is the subsequence of such type.

Thence use previous reasons and the (4.9) from step 2 we get $%
u_{k_{j}}\Longrightarrow u_{0}$ in $L^{2}\left( R_{+};W_{0}^{1,2}\left(
\Omega \right) \right) $ and in $L^{m}\left( R_{+};L^{2}\left( \Omega
\right) \right) $. Furthermore if take into account that in this problem
first two adding are linear continuous operator then we obtain that $%
h_{k_{j}}=f\left( u_{k_{j}}\right) \longrightarrow f\left( u_{0}\right)
\equiv h_{0}$, which show that $f(X)$ is the closed subset of $Y$.
\end{proof}

\textbf{Thus we can complete of the proof of Theorem 1}. From Propositions
2-5 we get, that the operator $f:X\longrightarrow Y$ generated by the posed
problem satisfies all conditions of the main theorem (Theorem 2, and also
Theorem 3). Then using Theorem 2 we obtain, that the operator $f$ satisfies
the statement of Theorem 2, therefore the statement of Theorem 1 is correct.
Consequently, the existence theorem for the problem (0.1)-(0.2) (i.e.
Theorem 1) is proved.

\section{Behaviour of Solutions of Problem (0.1) - (0.2)}

We will study the behaviour of the solution of problem (0.1) - (0.2) in the
sense of the space $W^{1,2}\left( Q\right) \cap $ $L^{m}\left(
R_{+};W_{0}^{1,2}\left( \Omega \right) \right) $. Consider the following
functional on the space $W^{1,2}\left( Q\right) $ $\cap $ $L^{m}\left(
R_{+};W_{0}^{1,2}\left( \Omega \right) \right) $ \ 
\begin{equation*}
I\left( v\left( t\right) \right) =\left\Vert \nabla v\left( t\right)
\right\Vert _{L^{2}}^{2}\equiv \left\Vert \nabla v\left( t\right)
\right\Vert _{2}^{2}\equiv \underset{\Omega }{\dint }\left\vert \nabla
v\left( t,x\right) \right\vert ^{2}dx.
\end{equation*}

So we have 
\begin{equation*}
2^{-1}\frac{d}{dt}I\left( u\left( t\right) \right) =\left\langle \frac{%
\partial }{\partial t}\nabla u,\overline{\nabla u}\right\rangle
=-\left\langle \frac{\partial }{\partial t}u,\overline{\Delta u}%
\right\rangle =
\end{equation*}%
here if to take account $u\left( t,x\right) $ is the solution of the
considered problem then we get 
\begin{equation*}
-i\left\langle \frac{\partial u}{\partial t},\frac{\partial \overline{u}}{%
\partial t}\right\rangle -\left\langle \frac{\partial u}{\partial t}%
,q\left\vert u\right\vert ^{p-2}\overline{u}\right\rangle -\left\langle 
\frac{\partial u}{\partial t},a\left\vert u\right\vert ^{\widetilde{p}-2}%
\overline{u}\right\rangle +\left\langle \frac{\partial u}{\partial t},%
\overline{h}\right\rangle =
\end{equation*}%
\begin{equation*}
-i\left\Vert \frac{\partial u}{\partial t}\right\Vert _{2}^{2}-p^{-1}\frac{d%
}{dt}\left\langle q,\left\vert u\right\vert ^{p}\right\rangle -\widetilde{p}%
^{-1}\frac{d}{dt}\left\langle a,\left\vert u\right\vert ^{\widetilde{p}%
}\right\rangle +\left\langle \frac{\partial u}{\partial t},\overline{h}%
\right\rangle
\end{equation*}%
whence we have 
\begin{equation*}
2^{-1}\frac{d}{dt}I\left( u\left( t\right) \right) =-\left\langle \frac{%
\partial u}{\partial t},q\left\vert u\right\vert ^{p-2}\overline{u}%
\right\rangle -\left\langle \frac{\partial u}{\partial t},a\left\vert
u\right\vert ^{\widetilde{p}-2}\overline{u}\right\rangle +\func{Re}%
\left\langle \frac{\partial u}{\partial t},\overline{h}\right\rangle
\end{equation*}%
and 
\begin{equation}
-\left\Vert \frac{\partial u}{\partial t}\right\Vert _{2}^{2}=-\func{Im}%
\left\langle \frac{\partial u}{\partial t},\overline{h}\right\rangle
\Longrightarrow \left\Vert \frac{\partial u}{\partial t}\right\Vert _{2}\leq
\left\Vert h\right\Vert _{2}  \tag{5.1}
\end{equation}%
for a.e. $t>0,$ as $I\left( u\left( t\right) \right) $ is a real function.

Thus we obtain 
\begin{equation*}
2^{-1}I\left( u\left( t\right) \right) -2^{-1}I\left( u_{0}\right) \leq
-p^{-1}\overset{t}{\underset{0}{\dint }}\frac{d}{ds}\left\langle
q,\left\vert u\right\vert ^{p}\right\rangle ds-\widetilde{p}^{-1}\overset{t}{%
\underset{0}{\dint }}\frac{d}{ds}\left\langle a,\left\vert u\right\vert ^{%
\widetilde{p}}\right\rangle ds+
\end{equation*}%
\begin{equation*}
\overset{t}{\underset{0}{\dint }}\left\vert \left\langle \frac{\partial u}{%
\partial s},\overline{h}\right\rangle \right\vert ds\leq -p^{-1}\left\langle
q,\left\vert u\right\vert ^{p}\right\rangle \left( t\right)
+p^{-1}\left\langle q,\left\vert u_{0}\right\vert ^{p}\right\rangle -
\end{equation*}%
\begin{equation*}
\widetilde{p}^{-1}\left\langle a,\left\vert u\right\vert ^{\widetilde{p}%
}\right\rangle \left( t\right) +\widetilde{p}^{-1}\left\langle a,\left\vert
u_{0}\right\vert ^{\widetilde{p}}\right\rangle +\overset{t}{\underset{0}{%
\dint }}\left\Vert \frac{\partial u}{\partial s}\right\Vert _{2}\left\Vert
h\right\Vert _{2}\left( s\right) ds\leq
\end{equation*}%
\begin{equation*}
k_{0}\left\Vert u\left( t\right) \right\Vert _{p_{2}}^{2}-\delta
\left\langle q,\left\vert u\right\vert ^{p}\right\rangle \left( t\right) +%
\overset{t}{\underset{0}{\dint }}\left\Vert h\right\Vert _{2}^{2}\left(
s\right) ds
\end{equation*}%
whence we get using the condition (1.2') (i.e. the inequality $%
2^{-1}>C\left( 2,p_{2}\right) ^{2}\cdot k_{0}$) and (5.1) 
\begin{equation*}
I\left( u\left( t\right) \right) \equiv \left\Vert \nabla u\left( t\right)
\right\Vert _{2}^{2}\leq 2k_{0}\left\Vert u\left( t\right) \right\Vert
_{p_{2}}^{2}+2\overset{t}{\underset{0}{\dint }}\left\Vert h\right\Vert
_{2}^{2}\left( s\right) ds+I\left( u_{0}\right) \Longrightarrow
\end{equation*}%
\begin{equation}
\left( 1-2C\left( 2,p_{2}\right) ^{2}k_{0}\right) \left\Vert \nabla u\left(
t\right) \right\Vert _{2}^{2}\leq 2\overset{t}{\underset{0}{\dint }}%
\left\Vert h\right\Vert _{2}^{2}\left( s\right) ds+I\left( u_{0}\right) 
\tag{5.2}
\end{equation}%
for a.e. $t\geq 0$.

Moreover in the previous section we obtained the following equations 
\begin{equation*}
I_{0}\left( u\left( t\right) \right) \equiv \left\Vert u\right\Vert
_{2}^{2}\Longrightarrow \frac{1}{2}\frac{d}{dt}I_{0}\left( u\left( t\right)
\right) =\left\langle u_{t},\overline{u}\right\rangle =
\end{equation*}%
\begin{equation*}
\left\langle u_{t},\overline{u}\right\rangle =i\left\langle -\Delta
u+q\left\vert u\right\vert ^{p-2}u+a\left\vert u\right\vert ^{\widetilde{p}%
-2}u-h,\overline{u}\right\rangle =
\end{equation*}%
\begin{equation*}
=i\left\Vert \nabla u\left( t\right) \right\Vert _{2}^{2}+i\left\langle
q\left( x\right) ,\left\vert u\right\vert ^{p}\right\rangle +i\left\langle
a\left( x\right) ,\left\vert u\right\vert ^{\widetilde{p}}\right\rangle
-i\left\langle h,\overline{u}\right\rangle \Longrightarrow
\end{equation*}%
\begin{equation*}
\left\Vert \nabla u\left( t\right) \right\Vert _{2}^{2}+\left\langle q\left(
x\right) ,\left\vert u\right\vert ^{p}\right\rangle +\left\langle a\left(
x\right) ,\left\vert u\right\vert ^{\widetilde{p}}\right\rangle -\func{Re}%
\left\langle h,\overline{u}\right\rangle =0
\end{equation*}%
\begin{equation*}
\&\ \frac{1}{2}\frac{d}{dt}I_{0}\left( u\left( t\right) \right) =-\func{Im}%
\left\langle h,\overline{u}\right\rangle
\end{equation*}%
whence follows 
\begin{equation}
\left\Vert \nabla u\left( t\right) \right\Vert _{2}^{2}\leq 4\left\Vert
h\left( t\right) \right\Vert _{2}^{2},\quad a.e.t\geq 0.  \tag{5.3}
\end{equation}

Consequently is proved the following result.

\begin{theorem}
Let the solution $u\left( t,x\right) $ of the problem (0.1)-(0.2) is a
sufficiently smooth function, i.e. $u\in W^{1,2}\left(
R_{+};W_{0}^{1,2}\left( \Omega \right) \right) \cap $ $L^{m}\left(
R_{+};W_{0}^{1,2}\left( \Omega \right) \right) $ and all conditions of
Theorem 1 are fulfilled. Then if the coefficient $k_{0}$ of the condition
(1.2) satisfy the inequality $2^{-1}>C\left( 2,p_{2}\right) ^{2}\cdot k_{0}$
then the following inequalities are true 
\begin{equation}
\left\Vert \nabla u\left( t\right) \right\Vert _{2},\left\Vert \frac{%
\partial u}{\partial t}\right\Vert _{2}\leq b\left\Vert h\left( t\right)
\right\Vert _{2}\Longrightarrow \left\Vert u\left( t\right) \right\Vert
_{2}\leq b_{1}\left\Vert h\left( t\right) \right\Vert _{2}  \tag{5.4}
\end{equation}%
for a.e. $t\geq 0$.
\end{theorem}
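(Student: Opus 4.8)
The plan is to obtain the two pointwise-in-time bounds asserted in (5.4) directly from the energy identities already derived above in this section, and then to pass from the gradient bound to an $L^{2}$ bound on $u$ itself via the Friedrichs (Poincar\'e) inequality on $W_{0}^{1,2}(\Omega)$. First I would record the bound on the time derivative: testing the equation against $\overline{u}$ through the functional $I_{0}(u(t))=\|u\|_{2}^{2}$ isolates $\partial u/\partial t$ in the imaginary part and yields exactly (5.1), namely $\|\partial u/\partial t\|_{2}\le \|h\|_{2}$ for a.e. $t>0$. Allowing a common constant $b$, this is already the second of the two estimates in (5.4).

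Next I would extract the pointwise gradient bound, which is the heart of the matter. The real part of the same $I_{0}$-identity is the algebraic relation $\|\nabla u(t)\|_{2}^{2}+\langle q,|u|^{p}\rangle+\langle a,|u|^{\widetilde{p}}\rangle=\operatorname{Re}\langle h(t),\overline{u(t)}\rangle$, valid pointwise in $t$ (this is the decisive difference from the time-integrated estimate (5.2) coming from $I(u(t))=\|\nabla u(t)\|_{2}^{2}$). Into this I substitute the coercivity inequality (1.2) of condition (\textit{ii}) to bound $\langle a,|u|^{\widetilde{p}}\rangle$ below by $-k_{0}\|u\|_{p_{2}}^{2}-k_{1}\langle q,|u|^{p}\rangle$; since $q\ge 0$ in the sense of Definition 2 and $k_{1}\le 1$, the resulting nonnegative term $(1-k_{1})\langle q,|u|^{p}\rangle$ may be discarded. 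This leaves $\|\nabla u\|_{2}^{2}-k_{0}\|u\|_{p_{2}}^{2}\le \operatorname{Re}\langle h,\overline{u}\rangle\le \|h\|_{2}\|u\|_{2}$. Applying the embedding $\|u\|_{p_{2}}\le C(2,p_{2})\|\nabla u\|_{2}$ together with $\|u\|_{2}\le C_{1}\|\nabla u\|_{2}$, I arrive at $(1-k_{0}C(2,p_{2})^{2})\|\nabla u\|_{2}^{2}\le C_{1}\|h\|_{2}\|\nabla u\|_{2}$; the hypothesis $2^{-1}>C(2,p_{2})^{2}k_{0}$ forces the coefficient $1-k_{0}C(2,p_{2})^{2}>2^{-1}>0$, so dividing out $\|\nabla u\|_{2}$ gives $\|\nabla u(t)\|_{2}\le b\|h(t)\|_{2}$ for a.e. $t\ge 0$ with a constant $b$ depending only on $C(2,p_{2})$, $k_{0}$ and the Poincar\'e constant. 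This is the pointwise estimate (5.3) and the first assertion of (5.4).

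Finally, the last implication of (5.4) is immediate. By the Friedrichs inequality on $W_{0}^{1,2}(\Omega)$ already invoked in the proof of Proposition 4 one has $\|u(t)\|_{2}\le C_{1}(\operatorname{mes}\Omega)\|\nabla u(t)\|_{2}$, and combining this with the gradient bound just obtained yields $\|u(t)\|_{2}\le C_{1}b\,\|h(t)\|_{2}=:b_{1}\|h(t)\|_{2}$ for a.e. $t\ge 0$.

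I expect the only genuinely delicate point to be the derivation of the \emph{pointwise-in-time} estimate (5.3), as opposed to the merely time-integrated estimate (5.2). What makes (5.3) available is the structural fact that, because $q$ and $a$ are real-valued, the two nonlinear terms enter the $I_{0}$-identity only as the real scalars $\langle q,|u|^{p}\rangle$ and $\langle a,|u|^{\widetilde{p}}\rangle$; testing against $\overline{u}$ rather than against $\partial_{t}\overline{u}+\overline{u}$ therefore produces an algebraic (not differential) relation carrying no surviving time integral. The strengthened smallness hypothesis $2^{-1}>C(2,p_{2})^{2}k_{0}$ is indispensable precisely here, since it is what keeps the coefficient of $\|\nabla u(t)\|_{2}^{2}$ strictly positive after the sign-indefinite term $k_{0}\|u\|_{p_{2}}^{2}$ has been absorbed; without it the gradient could not be isolated and the pointwise bound would fail.
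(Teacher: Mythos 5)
Your derivation of the pointwise gradient bound and of the final implication $\left\Vert u\left( t\right) \right\Vert _{2}\leq b_{1}\left\Vert h\left( t\right) \right\Vert _{2}$ coincides with the paper's own proof of (5.3): test the equation with $\overline{u}$, insert (1.2), discard the nonnegative term $\left( 1-k_{1}\right) \left\langle q,\left\vert u\right\vert ^{p}\right\rangle $ using $q\geq 0$ and $k_{1}\leq 1$, apply the embedding $\left\Vert u\right\Vert _{p_{2}}\leq C\left( 2,p_{2}\right) \left\Vert \nabla u\right\Vert _{2}$ and the hypothesis $2^{-1}>C\left( 2,p_{2}\right) ^{2}k_{0}$, then conclude with Friedrichs' inequality. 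The genuine gap is in your first step, the bound $\left\Vert \partial u/\partial t\right\Vert _{2}\leq \left\Vert h\right\Vert _{2}$, i.e. (5.1). You claim it is the imaginary part of the $I_{0}$-identity (the equation tested against $\overline{u}$). It is not. That identity is
\begin{equation*}
i\underset{\Omega }{\int }u_{t}\,\overline{u}\,dx+\left\Vert \nabla u\right\Vert _{2}^{2}+\left\langle q,\left\vert u\right\vert ^{p}\right\rangle +\left\langle a,\left\vert u\right\vert ^{\widetilde{p}}\right\rangle =\underset{\Omega }{\int }h\,\overline{u}\,dx,
\end{equation*}
and since the three middle terms are real, its imaginary part reads
\begin{equation*}
\operatorname{Re}\underset{\Omega }{\int }u_{t}\,\overline{u}\,dx=\frac{1}{2}\frac{d}{dt}\left\Vert u\left( t\right) \right\Vert _{2}^{2}=\operatorname{Im}\underset{\Omega }{\int }h\,\overline{u}\,dx,
\end{equation*}
a balance law for $\left\Vert u\right\Vert _{2}$ in which $\left\Vert u_{t}\right\Vert _{2}^{2}$ never appears: a pairing with $\overline{u}$ cannot isolate the time derivative. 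The paper obtains (5.1) from a different identity, namely by differentiating $I\left( u\left( t\right) \right) =\left\Vert \nabla u\left( t\right) \right\Vert _{2}^{2}$, which amounts to pairing the equation with $\partial _{t}\overline{u}$: there the term $iu_{t}$ contributes the purely imaginary quantity $i\left\Vert u_{t}\right\Vert _{2}^{2}$, while the gradient term and the two nonlinear terms contribute the real total derivatives $\frac{1}{2}\frac{d}{dt}\left\Vert \nabla u\right\Vert _{2}^{2}$, $p^{-1}\frac{d}{dt}\left\langle q,\left\vert u\right\vert ^{p}\right\rangle $ and $\widetilde{p}^{-1}\frac{d}{dt}\left\langle a,\left\vert u\right\vert ^{\widetilde{p}}\right\rangle $, so the imaginary part now does yield $\left\Vert u_{t}\right\Vert _{2}^{2}=\operatorname{Im}\left\langle u_{t},\overline{h}\right\rangle \leq \left\Vert u_{t}\right\Vert _{2}\left\Vert h\right\Vert _{2}$, which is (5.1).

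This is not merely a misattribution, because your gradient estimate tacitly needs (5.1) as input. The exact real part of the $\overline{u}$-identity is
\begin{equation*}
\left\Vert \nabla u\right\Vert _{2}^{2}+\left\langle q,\left\vert u\right\vert ^{p}\right\rangle +\left\langle a,\left\vert u\right\vert ^{\widetilde{p}}\right\rangle =\operatorname{Re}\underset{\Omega }{\int }h\,\overline{u}\,dx+\operatorname{Im}\underset{\Omega }{\int }u_{t}\,\overline{u}\,dx,
\end{equation*}
which carries the term $\operatorname{Im}\int_{\Omega }u_{t}\,\overline{u}\,dx$ that your ``algebraic relation'' (and, to be fair, the paper's corresponding display) drops without comment. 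The only simple way to control it is $\left\vert \operatorname{Im}\int_{\Omega }u_{t}\,\overline{u}\,dx\right\vert \leq \left\Vert u_{t}\right\Vert _{2}\left\Vert u\right\Vert _{2}\leq \left\Vert h\right\Vert _{2}\left\Vert u\right\Vert _{2}$, which is exactly (5.1); with it one gets $\left( 1-k_{0}C\left( 2,p_{2}\right) ^{2}\right) \left\Vert \nabla u\right\Vert _{2}^{2}\leq 2\left\Vert h\right\Vert _{2}\left\Vert u\right\Vert _{2}$ and your argument closes, only with a doubled constant. So within your own scheme the time-derivative bound is a prerequisite for the gradient bound, not an independent afterthought, and it must be produced by pairing the equation with $\partial _{t}\overline{u}$ --- the paper's route to (5.1) --- rather than with $\overline{u}$.
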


\begin{notation}
It should be noted that the next inequality follows from (4.12) in the
conditions of Theorem 5.1 
\begin{equation*}
\left\Vert \nabla u\left( t\right) \right\Vert _{2}^{2}\leq \widetilde{D}%
\left( \varepsilon ^{-1},...\right) \ \exp \left\{ -\widetilde{\eta }%
_{1}\left( \varepsilon \right) t\right\} +\left\Vert \nabla u_{0}\right\Vert
_{2}^{2}.)
\end{equation*}
\end{notation}


\begin{thebibliography}{10}
\bibitem[1]{A-S-V} Alves M, Sepulveda M, Vera O.: Smoothing properties for
the higher-order nonlinear Schrodinger equation with constant coefficients.
Journal Nonlinear Analysis: TMA, (2009), 71, 3-4.

\bibitem[2]{A-B} Ambrosetti A., Badiale M., Cingolani S.: Semiclassical
states of nonlinear Schr\"{o}dinger equations. Arch. Ration. Mech. Anal.
(1997), 140, 285--300.

\bibitem[3]{A-M-R} Ambrosetti A., Malchiodi A., Ruiz D.: Bound states of
nonlinear Schr\"{o}dinger equations with potentials vanishing at infinity.
J. Anal. Math., (2006), 98, 317--348.

\bibitem[4]{B} Bartsch T., Pankov A.,Wang Z. Q.: Nonlinear Schr\"{o}dinger
equations with steep potential well. Commun. Contemp. Math., (2001) 3,
549--569

\bibitem[5]{B-P} Brezis H., Ponce A. C.: Reduced measures on the boundary.
J. Funct. Anal. (2005), 229, 1.

\bibitem[6]{Ca} Cazenave Th.: \textit{Semilinear Schr\"{o}dinger Equations}.
Courant Lecture Notes in Mathematics, 10. New York University, Courant
Institute of Mathematical Sciences, New York; American Mathematical Society,
Providence, RI, (2003), xiv+323 pp.

\bibitem[7]{Cs} Cingolani S.: Semiclassical stationary states of nonlinear
Schr\"{o}dinger equations with an external magnetic field. J. Differential
Equations (2003), 188, 52--79.

\bibitem[8]{C-K} Colliander J., Keel M., Staffilani G., Takaoka H., Tao T.:
Global well-posedness and scattering for the energy-critical nonlinear Schr%
\"{o}dinger equation in $R^{3}$. Ann. of Math. (2008) (2) 167, no. 3,
767--865.

\bibitem[9]{C} Colliander J., Grillakis M., Tzirakis N.: Tensor Products and
Correlation Estimates with Applications to Nonlinear Schr\"{o}dinger
Equations. Commun.Pure\& Appl. Math. (2009) 62, 7, 920--968

\bibitem[10]{D} del Pino M., Felmer P.: Semi-classical states of nonlinear
Schr\"{o}dinger equations: a variational reduction method. Math. Ann. (2002)
324, 1--32

\bibitem[11]{F-W} Floer A., Weinstein A.: Nonspreading wave packets for the
cubic Schr\"{o}dinger equation with a bounded potential. J. Funct. Anal.
(1986), 69, 397--408.

\bibitem[12]{G-N} Gidas B., Ni W.M., Nirenberg L.: Symmetry of positive
solutions of nonlinear elliptic equations in $R^{N}$. Mathematical analysis
and applications, Part A, Adv. in Math. Suppl. Stud., 7a (1981), pp.
369--402, Academic Press, N-Y-London.

\bibitem[13]{G-T} Gilbarg D., Trudinger N. S.:\textit{\ Elliptic Partial
Differential Equations of Second Order}, (Second ed.). SpringerVerlag,
(1983), vol. 224, Berlin, New York, xiii+513.

\bibitem[14]{Gr-Th} Grebert B., Thomann L.: Resonant dynamics for the
quintic nonlinear Schr\"{o}dinger equation. Ann. Inst. H. Poincare Non
Lineare (2012) 29, 455-477

\bibitem[15]{G} Grossi M.: On the number of single-peak solutions of the
nonlinear Schr\"{o}dinger equation. Ann. Inst. H. Poincar\'{e} Anal. Non Lin%
\'{e}aire (2002) 19, 3, 261--280.

\bibitem[16]{Gu} Gui C. , Existence of multi-bump solutions for nonlinear
Schr\"{o}dinger equations via variational method. Comm. Partial Differential
Equations (1996), 21, 787--820.

\bibitem[17]{H} Hayashi N., Li Ch., Naumkin P. I.: Modified Wave Operator
for a System of Nonlinear Schr\"{o}dinger Equation in 2d. Commun. Partial
Diff. Eq. (2012) 37, 947-968.\smallskip

\bibitem[18]{H-P} Le Bris C., Lions P.-L.: From atoms to crystals: a
mathematical journey. Bull. Amer. Math. Soc. (N.S.) (2005), 42, 3, 291--363
(electronic).

\bibitem[19]{L-F} Lenells J., Fokas A. S.: On a novel integrable
generalization of the nonlinear Schrodinger equation. J. Nonlinearity
(2009), 22, 1, 11-27.

\bibitem[20]{l} Lin F., Zhang P.: Semiclassical Limit of the
Gross-Pitaevskii Equation in an Exterior Domain. Arch. Rational Mech. Anal.
(2006) 179, 79--107

\bibitem[21]{L-M} Lions J.-L., Magenes E. : \textit{Nonhomogeneous boundary
value problems and applications. v.1}, Springer-Verlag, (1972), v. 181,
Berlin Heidelberg New York, xvi+357.

\bibitem[22]{M} Makhankov V.G., Fedyanin V.K.: Non-linear effects in
quasi-one-dimensional models of condensed matter theory. Phys. Rep. (1984)
104, 1--86

\bibitem[23]{N-S} Noussair E.S., Swanson C.A.: Oscillation theory for
semilinear Schr\"{o}dinger equations and inequalities. Proc. Roy. Soc.
Edinburgh Sect. A (1975/76), 75, 67--81.

\bibitem[24]{O} Oh Y.G.: Existence of semiclassical bound states of
nonlinear Schr\"{o}dinger equations with potentials of the class $%
(V)_{\alpha }$. Commun. Partial Differ. Equ. (1988) 13, 1499--1519

\bibitem[25]{R} \ Rabinowitz P.H.: On a class of nonlinear Schr\"{o}dinger
equations. Z. Angew. Math. Phys. (1992), 43, 270--291.

\bibitem[26]{R-S} Rabinowitz P. H.: Stredulinsky Ed , Mixed States for an
Allen-Cahn Type Equation. Commun. Pure Appl. Math. (2003), 56, 8, 1078-1134

\bibitem[27]{S-A} Soltanov K. N., Akhmedov M.: On Nonlinear Parabolic
Equation in Nondivergent Form with Implicit Degeneration and Embedding
Theorems. arXiv:1207.7063, (2012), 25 p

\bibitem[28]{S1} Soltanov K. N.: On a Nonlinear Equation with Coefficients
which are Generalized Functions. Novi Sad J. Math. (2011) 41, No. 1, 43-52

\bibitem[29]{S2} Soltanov K. N.: On semi-continuous mappings, equations and
inclusions in the Banach space. Hacettepe J. Math. Statist. (2008), 37, 1.

\bibitem[30]{S3} Soltanov K. N.: Perturbation of the mapping and solvability
theorems in the Banach space. Nonlinear Analysis: T.M.\&A. (2010), 72, 1.

\bibitem[31]{S4} Soltanov K. N.: On Nonlinear Equation of Schr\"{o}dinger
type. arXiv:1208.2560v1, (2012), 16 p

\bibitem[32]{St} Stuart C. A.: Lectures on the Orbital Stability of Standing
Waves and Application to the Nonlinear Schrodinger Equation. Milan J. of
Math. (2008), 76, 1, 329--399.

\bibitem[33]{W-Z} Wang X., Zeng B.: On concentration of positive bound
states of nonlinear Schr\"{o}dinger equations with competing potential
functions. SIAM J. Math. Anal. (1997), 28, 633--655.

\bibitem[34]{W-X-Z} Wang J., Xu J. X., Zhang F. B.: Existence and
multiplicity of semiclassical solutions for a Schrodinger equation. J. Math.
Anal. and Appl. (2009), 357, 2.

\bibitem[35]{H-L-N} Yin H., Zhang P.: Bound states of nonlinear Schr\"{o}%
dinger equations with potentials tending to zero at infinity. J. Diffi Eq.,
(2009), 247, 2, 618-647.
\end{thebibliography}
\end{document}